\newlist{defaultenumerate}{enumerate}{3}
\setlist[defaultenumerate,1]{label=\arabic*.}%
\setlist[defaultenumerate,2]{label=\arabic*.}%
\setlist[defaultenumerate,3]{label=\arabic*.}%
\newcolumntype{L}[1]{>{\raggedright\arraybackslash}p{#1}}
\newcolumntype{C}[1]{>{\centering\arraybackslash}p{#1}}
\newcolumntype{R}[1]{>{\raggedleft\arraybackslash}p{#1}}
\newcommand{\calA}{\mathcal{A}}
\newcommand{\calB}{\mathcal{B}}
\newcommand{\calC}{\mathcal{C}}
\newcommand{\calD}{\mathcal{D}}
\newcommand{\calE}{\mathcal{E}}
\newcommand{\calP}{\mathcal{P}}
\newcommand{\sfS}{\mathsf{S}}
\newcommand{\Z}{\mathbb{Z}}
\newcommand{\F}{\mathbb{F}}
\newcommand{\1}{\mathbbm{1}}
\let\mod\undefined
\newcommand{\modulo}{\ \mathrm{mod} \ }
\newcommand{\modpm}{\ \mathrm{mod}^{\pm} \ }
\newcommand{\tr}{\mathrm{tr}}
\newcommand{\param}{\mathrm{par}}
\newcommand{\SIG}{\mathsf{SIG}}
\newcommand{\MSet}{\mathrm{MSet}}
\newcommand{\expect}{\mathrm{E}}
\DeclareMathSymbol{\mh}{\mathord}{operators}{`\-}
\newcommand{\floor}[1]{\lfloor#1 \rfloor}
\newcommand{\ceil}[1]{\lceil#1 \rceil}
\newcommand{\roundt}[1]{\lfloor#1 \rceil_t}
\newcommand{\abs}[1]{| #1 |}
\newcommand{\concat}{\ \| \ }
\newcommand{\norm}[1]{\| #1 \|}
\newcommand{\norminfty}[1]{\norm{#1}_{\infty}}
\newcommand{\mod}{\, \text{mod} \,}
\newcommand{\ketbra}[2]{\mathinner{|{#1}\rangle\hspace{-1pt}\langle{#2}|}}
\newcommand{\ketbrasame}[1]{\ketbra{#1}{#1}}
\newcommand{\Time}{\mathrm{Time}}
\newcommand{\polytext}{\mathrm{poly}}
\newcommand{\poly}{\mathrm{poly}(\lambda)}
\newcommand{\negl}{\mathrm{negl}(\lambda)}
\newcommand{\Dilithium}{\mathsf{Dilithium}}
\let\svhyphen-
\def\newhyphen{\raisebox{1pt}{\svhyphen}}
\newcommand{\DilithiumQROM}{\mathsf{Dilithium}\newhyphen\mathsf{QROM}}
\newcommand{\CRYSTALSDilithium}{\mathsf{CRYSTALS}\newhyphen\mathsf{Dilithium}}
\newcommand{\Adv}{\mathrm{Adv}}
\newcommand{\Collapse}{\mathsf{Collapse}}
\newcommand{\MLWE}{\mathsf{MLWE}}
\newcommand{\MSIS}{\mathsf{MSIS}}
\newcommand{\CCB}{\mathsf{CCB}}
\newcommand{\LWE}{\mathsf{LWE}}
\newcommand{\SIS}{\mathsf{SIS}}
\newcommand{\BKZ}{\mathsf{BKZ}}
\newcommand{\PR}{\mathsf{PR}}
\newcommand{\Sam}{\mathsf{Sam}}
\newcommand{\nHSTMSIS}{\mathsf{Plain}\newhyphen\mathsf{SelfTargetMSIS}}
\newcommand{\snHSTMSIS}{\mathsf{Plain}\raisebox{0pt}{\svhyphen}\mathsf{SelfTargetMSIS}}
\newcommand{\UFCMA}{\mathsf{EUF}\raisebox{0pt}{\svhyphen}\mathsf{CMA}}
\newcommand{\sUFCMA}{\mathsf{sEUF}\raisebox{0pt}{\svhyphen}\mathsf{CMA}}
\newcommand{\sEUFNMA}{\mathsf{sEUF}\raisebox{0pt}{\svhyphen}\mathsf{NMA}}
\newcommand{\STMSIS}{\mathsf{SelfTargetMSIS}}
\newcommand{\tMLWE}{\texorpdfstring{$\MLWE$}{mlwe} }
\newcommand{\tSTMSIS}{\texorpdfstring{$\STMSIS$}{selftargetmsis} }
\newcommand{\tCRYSTALSDilithium}{\texorpdfstring{$\CRYSTALSDilithium$}{CRYSTALSDilithium} }
\newcommand{\tDilithium}{\texorpdfstring{$\Dilithium$}{Dilithium} }
\newcommand{\tnHSTMSIS}{\texorpdfstring{$\nHSTMSIS$}{nHselftargetmsis} }
\newcommand{\tCCB}{\texorpdfstring{$\CCB$}{ccb} }
\newcommand{\tCollapse}{\texorpdfstring{$\Collapse$}{collapse} }
\newcommand{\KeyGen}{\mathrm{Gen}}
\newcommand{\Sign}{\mathrm{Sign}}
\newcommand{\Verify}{\mathrm{Verify}}
\newcommand{\Btau}{B_\tau}
\newcommand{\low}{\mathrm{low}}
\newcommand{\high}{\mathrm{high}}
\newtheorem{theorem}{Theorem}
\newtheorem{proposition}{Proposition}
\newtheorem{lemma}{Lemma}
\newtheorem{corollary}{Corollary}
\theoremstyle{definition}
\newtheorem{definition}{Definition}
\begin{document}

\title{Evaluating the security of \tCRYSTALSDilithium\\  in the quantum random oracle model}

\author[1]{Kelsey A.~Jackson\thanks{kaj22475@umd.edu}}
\author[1,2]{Carl A.~Miller\thanks{camiller@umd.edu}}
\author[1,3]{Daochen Wang\thanks{wdaochen@gmail.com}}
\affil[1]{\small Joint Center for Quantum Information and Computer Science (QuICS), University of Maryland}
\affil[2]{\small Computer Security Division, National Institute of Standards and Technology (NIST)}
\affil[3]{\small Department of Computer Science, University of British Columbia}

\date{\vspace{-6ex}}
\maketitle

\begin{abstract}
    In the wake of recent progress on quantum computing hardware, the National Institute of Standards and Technology (NIST) is standardizing cryptographic protocols that are resistant to attacks by quantum adversaries. The primary digital signature scheme that NIST has chosen is $\CRYSTALSDilithium$. The hardness of this scheme is based on the hardness of three computational problems: Module Learning with Errors ($\MLWE$), Module Short Integer Solution ($\MSIS$), and $\STMSIS$. $\MLWE$ and $\MSIS$ have been well-studied and are widely believed to be secure. However, $\STMSIS$ is novel and, though classically as hard as $\MSIS$, its quantum hardness is unclear. In this paper, we provide the first proof of the hardness of $\STMSIS$ via a reduction from $\MLWE$ in the Quantum Random Oracle Model (QROM). Our proof uses recently developed techniques in quantum reprogramming and rewinding. A central part of our approach is a proof that a certain hash function, derived from the $\MSIS$ problem, is collapsing. From this approach, we deduce a new security proof for $\Dilithium$ under appropriate parameter settings. Compared to the
    previous work by Kiltz, Lyubashevsky, and Schaffner (EUROCRYPT 2018) that gave the only other rigorous security proof for a variant of $\Dilithium$, our proof has the advantage of being applicable under the condition $q = 1 \modulo 2n$, where $q$ denotes the modulus and $n$ the dimension of the underlying algebraic ring.  This condition is part of the original $\Dilithium$ proposal and is crucial for the efficient implementation of the scheme. We provide new secure parameter sets for $\Dilithium$ under the condition $q = 1 \modulo 2n$, finding that our public key size and signature size are about $2.9\times$ and $1.3\times$ larger, respectively, than those proposed by Kiltz et al.~at the same security level.
\end{abstract}

\section{Introduction}

Quantum computers are theoretically capable of breaking the underlying computational hardness assumptions for many existing cryptographic schemes. Therefore, it is vitally important to develop new cryptographic primitives and protocols that are resistant to quantum attacks. 

The goal of NIST's Post-Quantum Cryptography Standardization Project is to design a new generation of cryptographic schemes that are secure against quantum adversaries.  In 2022, NIST selected three new digital signature schemes for standardization \cite{alagic2022status}: Falcon, SPHINCS+, and $\CRYSTALSDilithium$.  Of the three, $\CRYSTALSDilithium$ \cite{Dilithium}, or $\Dilithium$ in shorthand,  was identified as the primary choice for post-quantum digital signing.

To practically implement post-quantum cryptography, users must be provided with not only assurance that a scheme is secure in a post-quantum setting, but also the means by which to judge parameter choices and thereby balance their own needs for security and efficiency. The goal of the current work is to provide rigorous assurance of the security of $\Dilithium$ as well as implementable parameter sets. A common model for the security of digital signatures is existential unforgeability against chosen message attacks, or $\UFCMA$.  In this setting, an adversary is allowed to make sequential queries to a signing oracle for the signature scheme, and then afterwards the adversary attempts to forge a signature for a new message. We work in the setting of \textit{strong} existential unforgeability ($\sUFCMA$) wherein we must also guard against the possibility that an adversary could try to forge a new signature for one of the messages already signed by the oracle. (See \cref{sec:prelim} for details.)

Additionally, we utilize the quantum random oracle model (QROM) for hash functions.  We recall that when a hash function $H: X\rightarrow Y$ is used as a subroutine in a digital signature scheme, the random oracle model (ROM) assumes that one can replace each instance of the function $H$ with a black box that accepts inputs from $X$ and returns outputs in $Y$ according to a uniformly randomly chosen function from $X$ to $Y$. (This model is useful because random functions are easier to work with in theory than actual hash functions.)  The random oracle model needs to be refined in the quantum setting because queries to the hash function can be made in superposition: for any quantum state of the form $\sum_{x \in X} \alpha_x \left| x \right>$, where $\forall x\in X, \alpha_x \in \mathbb{C}$, a quantum computer can efficiently prepare the superposed state $\sum_{x \in X} \alpha_x \left| x \right> \left| H ( x ) \right>$. The quantum random oracle model (QROM) therefore assumes that each use of the hash function can be simulated by a black box that accepts a quantum state supported on $X$ and returns a quantum state supported on $X \times Y$ (computed by a truly random function from $X$ to $Y$) \cite{ROQW}. While no efficient and truly random functions actually exist, the QROM is generally trusted and it enables the application of a number of useful proof techniques.

\subsection{The \tDilithium signature scheme}

We give a brief description of $\CRYSTALSDilithium$.  (The reader is invited to consult \cite{Dilithium} for a full version of the protocol and a more detailed explanation of the design.) 
$\Dilithium$ is based on arithmetic over the ring $R_q \coloneqq \Z_q[X]/(X^n+1)$, where $q$ is an odd prime and $n$ is a power of $2$. Similar to other $\Dilithium$ literature, we generally leave the parameters $q,n$ implicit.  For any non-negative integer $\eta$, let $S_\eta \subseteq R_q$ denote the set of all polynomials with coefficients from $\{ - \eta, -\eta + 1 , \ldots, \eta \}$.  For any positive integer $\tau \leq n$, let $B_\tau \subseteq R_q$ denote the set of all polynomials $f$ such that exactly $\tau$ of the coefficients of $f$ are in $\{ -1, 1 \}$ and the remaining coefficients are all zero.

$\Dilithium$ is an instance of a general family of lattice-based signature schemes (see \cite[Subsection~5.6.2]{DecadePeikert}) that are obtained by applying the Fiat-Shamir transform to lattice-based interactive proofs-of-knowledge.
Neglecting some optimizations that are present in the full version of the scheme, we can concisely express $\Dilithium$ as in \cref{fig:dilithium_simplified}. The parameters $k, \ell, \gamma_1, \gamma_2, \tau, \beta$ are positive integers, and $H$ denotes a hash function which maps to the set $B_\tau$.  A signature for a message $M \in \{ 0, 1 \}^*$ takes the form of an ordered pair $\sigma = (z,c)$, where $z \in R_q^\ell$ and $c \in B_\tau$. 

\begin{figure}[ht]
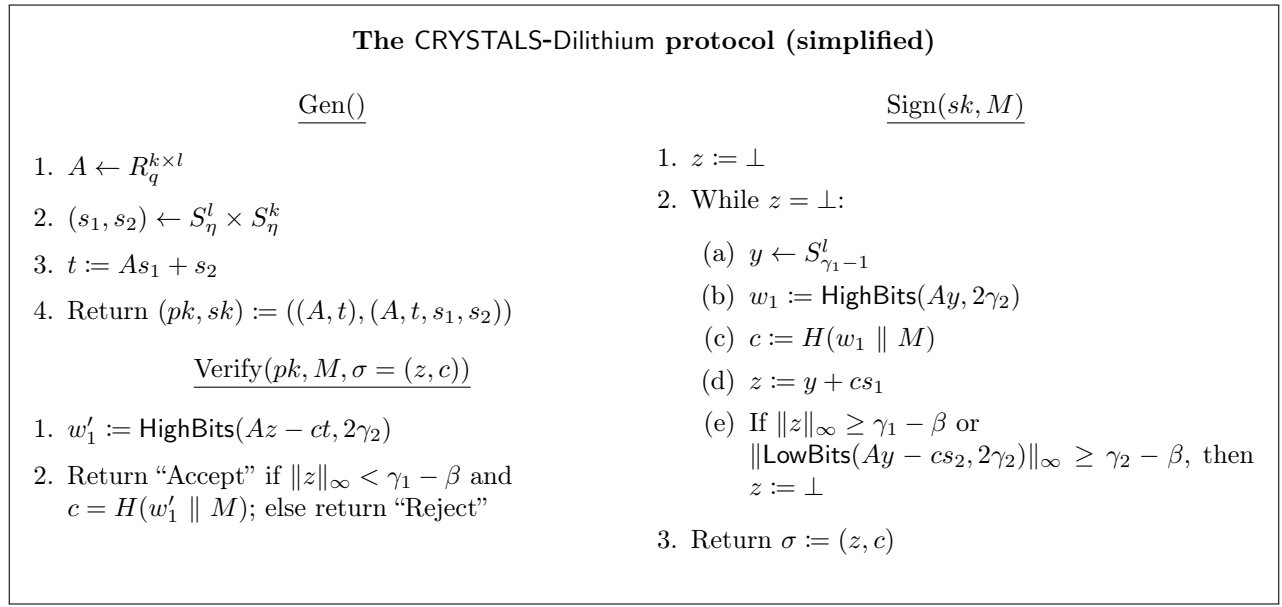

\centering
\begin{framed}
\begin{center} 
\textbf{The $\CRYSTALSDilithium$ protocol (simplified)} 
\begin{multicols}{2}
    \noindent \underline{$\KeyGen()$}
    \begin{enumerate}
    \item $A\leftarrow R_q^{k\times l}$
    \item $(s_1, s_2) \leftarrow S_\eta^l \times S_\eta^k$
    \item $t\coloneqq As_1 + s_2$
    \item Return $(pk, sk) \coloneqq ((A, t), (A, t, s_1, s_2))$
    \end{enumerate}
    \noindent \underline{$\Verify(pk, M, \sigma=(z, c))$}
    \begin{enumerate}
    \item $w_1' \coloneqq \mathsf{HighBits}(Az-ct, 2\gamma_2)$
    \item Return ``Accept'' if $\norminfty{z} < \gamma_1-\beta$ and \newline $c = H(w_1'\concat M)$; else return \mbox{``Reject''}
    \end{enumerate} 
    \pagebreak
    \noindent \underline{$\Sign(sk, M)$}
    \begin{enumerate}
    \item $z \coloneqq \bot$
    \item While $z = \bot$:
    \begin{enumerate}
        \item $y\leftarrow S_{\gamma_1-1}^{l}$
        \item $w_1 \coloneqq \mathsf{HighBits}(Ay, 2\gamma_2)$
        \item $c \coloneqq H(w_1 \concat M)$
        \item $z \coloneqq y + cs_1$
        \item If $\norminfty{z}\geq \gamma_1-\beta $ or \newline $ \norminfty{\mathsf{LowBits}(Ay-cs_2, 2\gamma_2)} \geq \gamma_2-\beta$,
        then $z \coloneqq \bot$
    \end{enumerate}
    \item Return $\sigma \coloneqq (z, c)$
    \end{enumerate}
\end{multicols}
\end{center}
\end{framed}
\caption{A simplified description of the key generation algorithm ($\KeyGen$), signature verification algorithm ($\Verify$) and signing algorithm ($\Sign$) for $\Dilithium$.}
\label{fig:dilithium_simplified}
\end{figure}

The algorithms in \cref{fig:dilithium_simplified} make use of the subroutines $\mathsf{HighBits}$ and $\mathsf{LowBits}$ which separate an $R_q$-vector into two parts. For any vector $x \in R_q^\ell$, the vectors $x_{\high} \coloneqq \mathsf{HighBits} ( x, 2 \gamma_2 )$ and 
$x_{\low} \coloneqq \mathsf{LowBits} ( x, 2 \gamma_2 )$ satisfy $x = (2 \gamma_2) x_{\high} + x_{\low}$, and the polynomial coefficients in $x_{\low}$ are all from the set $\{ -\gamma_2, -\gamma_2 + 1, \ldots, \gamma_2 \}$.

\subsection{Known security results for \tDilithium}
\label{subsec:knownsecurity}

The security analysis for $\Dilithium$ in \cite{Dilithium} is based on three computational problems.  The first two are standard problems (\cref{prob:MLWE,prob:MSIS}) but the third problem is non-standard (\cref{prob:STMSIS}).  The first problem is the Module Learning With Errors ($\MLWE$) problem. Assuming that a matrix $A \in R_q^{m \times k}$ and short vectors $s_1 \in S_\eta^k$ and $s_2 \in S_\eta^m$ are chosen uniformly at random,
the $\MLWE$ problem is to distinguish the matrix-vector pair $(A, t \coloneqq As_1 + s_2)$ from a  uniformly random matrix-vector pair.

\begin{definition}[Module Learning with Errors ($\MLWE$)]\label{prob:MLWE} 
Let $m, k, \eta \in \mathbb{N}$. The advantage of an algorithm $\calA$ for solving $\MLWE_{m, k, \eta}$ is defined as:
\begin{align}\label{eq:mlwe}
\begin{aligned}
    \Adv^{\MLWE}_{m,k,\eta}(\calA) \coloneqq& \bigl\vert \Pr[b=0 \mid A \leftarrow R_q^{m\times k},  \ t\leftarrow R_q^m, \ b\leftarrow\calA(A, t)] 
    \\
  &\quad- \Pr[b=0 \mid A \leftarrow R_q^{m\times k}, \ (s_1, s_2) \leftarrow S_\eta^k \times S_\eta^m, t \coloneqq As_1 + s_2, \ b\leftarrow\calA(A, t)] \bigr\vert.
\end{aligned}
\end{align}
\end{definition}

Here, the notation $\calA(x)$ denotes $\calA$ taking input $x$.
We note that the $\MLWE$ problem is often phrased in other contexts with the short vectors $s_1$ and $s_2$ coming from a Gaussian, rather than a uniform, distribution.  The use of a uniform distribution is one of the particular features of $\CRYSTALSDilithium$.

The second problem, $\MSIS$, is concerned with finding short solutions to randomly chosen linear systems over $R_q$.

\begin{definition}[Module Short Integer Solution ($\MSIS$)]\label{prob:MSIS}
Let $m,k, \gamma \in \mathbb{N}$. The advantage of an algorithm $\calA$ for solving $\MSIS_{m, k, \gamma}$ is defined as:
\begin{equation}
\begin{aligned}
     \Adv^{\MSIS}_{m,k,\gamma}(\calA) \coloneqq \Pr\bigl[ [I_m|A]  \cdot & y = 0 \wedge 0 < \norminfty{y} \leq \gamma \mid  A\leftarrow R_q^{m\times k}, \ y\leftarrow\calA(A)\bigr].
\end{aligned} 
\end{equation}
\end{definition}

The third problem is a more complex variant of $\MSIS$ that incorporates a hash function $H$.

\begin{definition}[$\STMSIS$]\label{prob:STMSIS}
Let $\tau,m,k,\gamma \in \mathbb{N}$ and $H \colon \{0,1\}^* \to \Btau$, where $\Btau \subseteq R_q$ is the set of polynomials with exactly $\tau$ coefficients in $\{-1,1\}$ and all remaining coefficients zero. The advantage of an algorithm $\calA$ for solving $\STMSIS_{H,\tau,m,k,\gamma}$ is defined as\footnote{$\concat$ denotes string concatenation. $\calA^{\ket{H}}$ denotes $\calA$ with quantum query access to $H$ --- a formal definition can be found in \cref{def:qAlg}.}:
\begin{equation}
\Adv^{\STMSIS}_{H,\tau,m,k,\gamma}(\calA) \coloneqq \Pr\Bigl[H( [I_m | A] \cdot y  \concat M) = y_{m+k} \wedge \norminfty{y} \leq \gamma \bigm| \\
A \leftarrow R_q^{m\times k}, \ (y, M) \leftarrow\calA^{\ket{H}}(A) \Bigr].
\end{equation}
\end{definition}

The security guarantee for $\CRYSTALSDilithium$ is given in \cite[Section 4.5]{Concrete} by the inequality\footnote{Strictly speaking, there should be two other terms ($\Adv_{\Sam}^\PR(\calE)$ and $2^{-\alpha+1}$) on the right-hand side of \cref{eq:dilithium_security}. However, we ignore them in the introduction as it is easy to set parameters such that these terms are very small. We also mention that the original proof of this inequality uses a flawed analysis of Fiat-Shamir with aborts. The flaw was found and fixed in \cite{FixingWu,FixingStehle}.}
\begin{align}\label{eq:dilithium_security}
    \Adv^{\sUFCMA}_{\Dilithium}(\calA) \leq  \Adv_{k, l, \eta}^{\MLWE}(\calB) + \Adv_{H, \tau, k, l+1, \zeta}^{\STMSIS}(\calC) + \Adv_{k, l, \zeta'}^{\MSIS}(\calD),
\end{align}
where all terms on the right-hand side of the inequality depend on parameters that specify $\Dilithium$, and $\sUFCMA$ stands for strong unforgeability under chosen message attacks.
The interpretation of \cref{eq:dilithium_security} is: if there exists a quantum algorithm $\calA$ that attacks the $\sUFCMA$-security of $\Dilithium$, then there exist quantum algorithms $\calB,\calC,\calD$ for $\MLWE$, $\STMSIS$, and $\MSIS$ that have advantages satisfying \cref{eq:dilithium_security} and run in time comparable to $\calA$. \cref{eq:dilithium_security} implies that breaking the $\sUFCMA$ security of $\Dilithium$ is at least as hard as solving one of the $\MLWE$, $\MSIS$, or $\STMSIS$ problems. $\MLWE$ and $\MSIS$ are known to be no harder than $\LWE$ and $\SIS$, respectively. However, there are no known attacks taking advantage of their module structure so it is generally believed that they are as hard as their unstructured counterparts \cite{langloisStehle}. In turn, $\LWE$ and $\SIS$ are at least as hard as the (Gap) Shortest Vector Problem, which is the underlying hard problem of lattice cryptography \cite{SisAtjai,RegevLwe,DecadePeikert}.

However, the final problem, $\STMSIS$, is novel and so its difficulty is an open question. The problem is known to be as classically hard as $\MSIS$ since there exists a reduction from $\MSIS$ to $\STMSIS$ in the ROM \cite{Concrete,Forking}. The reduction uses the following ``rewinding'' argument. Any randomized algorithm can be specified by a deterministic circuit with auxiliary random bits. Therefore, given a randomized algorithm for $\STMSIS$, we can run its deterministic circuit with some randomly chosen bits to obtain one solution and then rewind and run it again using the same bits chosen from before, while at the same time reprogramming the random oracle at the query corresponding to the output of the first run, to obtain a second solution. Subtracting these two solutions to $\STMSIS$ yields a solution to $\MSIS$.
However, the argument fails for the following reasons in the QROM (where a quantum algorithm can make queries in superposition to a quantum random oracle):
\begin{enumerate}
    \item The randomness in a quantum algorithm includes the randomness of measurement outcomes. We cannot run a quantum algorithm twice and guarantee that the ``random bits'' will be the same in both runs because we cannot control measurement outcomes. More generally, we cannot rewind a quantum algorithm to a post-measurement state.
    \item Since a quantum algorithm can make queries in superposition, it is no longer clear where to reprogram the random oracle.
\end{enumerate}

Currently, the only explicit rigorous proof of $\Dilithium$'s security based on conventional hardness assumptions 
\cite{Concrete} requires modifying the parameters to be such that $q = 5 \mod 8$ and $2\gamma < \sqrt{q/2}$ (but $n$ must remain a power of $2$), where $\gamma$ is a length upper bound on vectors corresponding to valid signatures. This ensures that all non-zero vectors in $S_{2\gamma}$ are invertible which equips $\Dilithium$ with a so-called ``lossy mode''. This variant is called $\DilithiumQROM$. \cite{Concrete} then proves that a signature scheme with such a lossy mode is $\UFCMA$. However, the $\Dilithium$ specification~\cite{Dilithium} uses a value of $q$ satisfying $q = 1 \mod 2n$ which is incompatible with the assumption that $q = 5 \mod 8$ and $n>2$ is a power of $2$.\footnote{The parameter $n$ should not be $1$ or $2$, as that would significantly degrade Dilithium's efficiency and defeat the purpose of its use of $R_q$.} The fact that $q = 1 \mod 2n$ is central to claims about the speed of the algorithms in \cite{Dilithium}: this condition implies that $R_q$ is isomorphic to the direct product ring $\Z_q^{\times n}$ (or $\Z_q^n$ in shorthand) via the Number Theoretic Transform (NTT), which allows for fast matrix multiplication over $R_q$. Therefore, it is highly desirable to find a security proof that works under the assumption that $q = 1 \mod 2n$. Moreover, when $q = 5 \mod 8$ (and $n$ is a power of $2$), the ring $R_q$ is \emph{structurally} different from when $q = 1 \mod 2n$, since in the former case $R_q$ is isomorphic to $\F_{q^{n/2}}\times \F_{q^{n/2}}$ \cite[Lemma 2.1]{VerifiableEncryptionLyubashevskyNeven}. Therefore, it may be imprudent to translate any claims of security in the case $q = 5 \mod 8$ to the case $q = 1 \mod 2n$.

\subsection{Overview of main result}

The main result of our paper is the first proof of the computational hardness of the $\STMSIS$ problem, presented in \cref{sec:security_stmsis}. 
This hardness result implies a new security proof for $\Dilithium$ which, unlike the previous proof in \cite{Concrete}, applies
to the case $q=1 \mod 2n$.
Specifically, we reduce $\MLWE$ to $\STMSIS$.  
By \cref{eq:dilithium_security}, our result implies that the security of $\Dilithium$ (with parameters that are not too far from the original parameters) can be based on the hardness of $\MLWE$ and $\MSIS$.

\begin{theorem}[{Informal version of \cref{thm:stmsis}}]
\label{thm:stmsisinformal}
    Let $m,k,\tau,\gamma,\eta \in \mathbb{N}$. Suppose $q \geq 16$, $q = 1 \modulo 2n$, and $2\gamma\eta n(m+k) < \floor{q/32}$. If there exists an efficient quantum algorithm $\mathcal{A}$ that solves $\STMSIS_{H,\tau, m,k,\gamma}$ with advantage $\epsilon$, under the assumption that $H$ is a random oracle, then there exists an efficient quantum algorithm for solving $\MLWE_{m+k,m,\eta}$ with advantage at least $\Omega(\epsilon^2/Q^4)$.  Here, $Q$ denotes the number of quantum queries $\mathcal{A}$ makes to $H$.
\end{theorem}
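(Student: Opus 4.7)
The plan is to carry out, in the QROM, the classical ``rewind and reprogram'' argument sketched in the introduction, using recent quantum measure-and-reprogram techniques together with a collapsing lemma for a linear hash function derived from $\MSIS$. The reduction from $\STMSIS_{H,\tau,m,k,\gamma}$ to $\MLWE_{m+k,m,\eta}$ will factor through two intermediate claims: first, that under $\MLWE_{m+k,m,\eta}$ the linear hash $h_A(y) \coloneqq [I_m | A] \cdot y$, restricted to vectors $y \in R_q^{m+k}$ with $\norminfty{y} \leq \gamma$, is \emph{collapsing} in the QROM; second, that any $\STMSIS$ adversary of advantage $\epsilon$ making $Q$ quantum queries can be turned into a collapsing distinguisher of advantage $\Omega(\epsilon^2/Q^4)$ via a twofold application of the Don--Fehr--Majenz--Schaffner style measure-and-reprogram technique.

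For the collapsing lemma I would use a lossy-switching argument. By $\MLWE_{m+k,m,\eta}$, a uniform $A$ is computationally indistinguishable from a matrix $A'$ constructed from an $\MLWE$ sample and for which $h_{A'}$ is \emph{injective} on short differences $\{y - y' : \norminfty{y}, \norminfty{y'} \leq \gamma\}$. The parameter condition $2\gamma \eta n (m+k) < \floor{q/32}$ is calibrated precisely so that any such difference, once multiplied by the short $\MLWE$ secrets drawn from $S_\eta$, still has every NTT coordinate of absolute value strictly less than $q/2$ and hence cannot vanish modulo $q$. Injectivity on short differences yields statistical collapsing of $h_{A'}$, which hybridizes back to computational collapsing of $h_A$ under $\MLWE$.

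For the rewinding step I would apply measure-and-reprogram to $\calA$. A single application picks one of $\calA$'s $Q$ queries uniformly at random, measures its query register to obtain an input $w = u \concat M$ (with $u \in R_q^m$), reprograms $H$ at $w$ to a fresh random $c \in \Btau$, and with probability $\Omega(\epsilon/Q^2)$ yields an $\STMSIS$ output $(y, M)$ satisfying $[I_m | A] \cdot y = u$ and $y_{m+k} = c$. Coupling two such invocations on the same adversary, with the same target query $w$ but independently resampled reprogrammed answers $c_1 \neq c_2$, gives with probability $\Omega(\epsilon^2/Q^4)$ a pair $(y_1, y_2)$ with $[I_m | A] \cdot y_1 = [I_m | A] \cdot y_2$, $\norminfty{y_i} \leq \gamma$, and $(y_1)_{m+k} = c_1 \neq c_2 = (y_2)_{m+k}$, so $y_1 \neq y_2$. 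This is precisely a collision of $h_A$ on small inputs, violating collapsing and thus distinguishing $\MLWE$ from uniform with the claimed advantage.

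The main obstacle I expect is establishing collapsing of $h_A$ under the modular condition $q = 1 \modulo 2n$. The prior $\DilithiumQROM$ analysis of~\cite{Concrete} proceeds under $q = 5 \modulo 8$, a regime in which short nonzero elements of $R_q$ are invertible and the lossy-mode injectivity is essentially immediate. Here $R_q$ splits completely into $n$ copies of $\F_q$ via the NTT, so a generic short element is \emph{not} invertible; the lossy-mode injectivity must instead be argued coordinate-wise through the NTT, while tracking how the uniform secret distribution $S_\eta$ propagates through ring arithmetic. A secondary subtlety is coupling the two measure-and-reprogram extractions tightly enough to retain the full $\epsilon^2/Q^4$ factor, which is exactly what the collapsing property is designed to buy.
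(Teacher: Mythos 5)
There is a genuine gap at the heart of your ``rewinding step.'' You propose to ``couple two such invocations on the same adversary, with the same target query $w$ but independently resampled reprogrammed answers $c_1 \neq c_2$'' to extract a collision. This is exactly the classical forking/rewinding argument that fails in the QROM, and it is the very obstacle the paper identifies in the introduction: running a quantum algorithm twice does not reproduce the same measurement outcomes (so the second run has no reason to produce the same $w$), and once the first run has output, the adversary's state has irreversibly collapsed. Measure-and-reprogram, as proved in \cite{Reprogram}, is a one-shot statement and cannot be iterated in the way you describe. The paper circumvents this by applying measure-and-reprogram only \emph{once}, to reduce $\STMSIS$ to a chosen-coordinate-binding experiment $\CCB$ (\cref{def:ccb}), and then obtaining the quadratic loss not by a second run but by running the $\CCB$ responder once on a \emph{superposition} of all challenges and invoking a two-projections inequality (\cref{lem:two_projections}, used in \cref{prop:ccb_collapsing}). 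That superposition-plus-projection trick is the key quantum idea your plan is missing, and without it the claimed $\Omega(\epsilon^2/Q^4)$ bound has no valid justification.

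Your collapsing step is also not how the paper argues. A lossy-switching/injectivity argument is essentially the Kiltz--Lyubashevsky--Schaffner route for $q = 5 \bmod 8$, and, as you acknowledge, short nonzero elements of $R_q$ are generically zero-divisors when $q = 1 \bmod 2n$; your proposed fix (arguing coordinate-wise through the NTT) does not obviously close the gap, since the NTT coordinates of a short $e\cdot\Delta$ are \emph{not} themselves short --- only the polynomial coefficients are. The paper instead establishes collapsing via the Liu--Montgomery--Zhandry measurement-distinguishing technique (\cref{prop:collapsing_mlwe}): one defines rounded measurements $M_1^t$ (with an MLWE-distributed $b$) and $M_2^t$ (with uniform $b$), shows $M_1^t$ is close to the identity and $M_2^t$ is close to the computational-basis measurement, and distinguishes them with the collapsing adversary. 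The parameter condition $2\gamma\eta n(m+k) < \lfloor q/32\rfloor$ enters there (\cref{lem:measurement_lwe}) to ensure $e_2\cdot(y-y')$ has small coefficients so rounding is preserved --- not for any lossy-mode injectivity --- and the condition $q=1\bmod 2n$ enters in \cref{lem:uniform_distribution} to show $(b\cdot\Delta)_\alpha$ is uniform over $\Z_q$ for nonzero $\Delta$, which is what makes $M_2^t$ behave like a full measurement.
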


We now give a high-level overview of the proof. The first step is to define two experiments: the \emph{chosen-coordinate binding} experiment $\CCB$ and the \emph{collapsing} experiment $\Collapse$. These experiments are interactive protocols between a verifier and a prover. The protocols end with the verifier outputting a bit $b$. If $b=1$, the prover is said to \emph{win} the experiment. The reduction then proceeds in three steps: (i) reduce winning $\CCB$ to solving $\STMSIS$, (ii) reduce winning $\Collapse$ to winning $\CCB$, and (iii) reduce solving $\MLWE$ to winning $\Collapse$. Combining these steps together gives a reduction from $\MLWE$ to $\STMSIS$. The reduction can be illustrated as
\begin{align}
    \STMSIS \overset{(i)}{\longleftarrow} \CCB \overset{(ii)}{\longleftarrow} \Collapse \overset{(iii)}{\longleftarrow} \MLWE,
\end{align}
where the left arrow means ``reduces to''.

\paragraph{Step (i): $ \STMSIS \leftarrow \CCB$.}   

In the $\CCB$ experiment, the prover is first given a uniformly random $A\in R_q^{m\times l}$ which it uses to send the verifier some $z \in R_q^m$, the verifier then sends the prover a challenge $c$ chosen uniformly at random from $\Btau$, and finally the prover sends the verifier a response $y \in R_q^l$. The prover wins if $Ay = z$, $\norminfty{y}\leq \gamma$, and the last coordinate of $y$ is $c$. 

We directly apply the main result of \cite{Reprogram} to reduce winning $\CCB$ when $l=m+k$ to solving $\STMSIS_{H,\tau,m,k,\gamma}$ when $H$ is a random oracle. In more detail, the result implies that an efficient algorithm that wins $\STMSIS$ using $Q$ queries with probability $\epsilon$ can be used to construct another efficient algorithm that wins $\CCB$ with probability at least $\Omega(\epsilon/Q^2)$.

\paragraph{Step (ii): $\CCB \leftarrow \Collapse$.} In the $\Collapse$ experiment, the prover is first given a uniformly random $A\in R_q^{m\times l}$ which it uses to send the verifier some $z\in R_q^m$ together with a quantum state that must be supported only on $y \in R_q^l$ such that $Ay = z$, $\norminfty{y} \leq \gamma$. Then, the verifier samples a uniformly random bit $b'$. If $b'=1$, the verifier measures the quantum state in the computational basis, otherwise, it does nothing. The verifier then returns the quantum state to the prover. The prover responds by sending a bit $b'$ to the verifier and wins if $b'=b$. The advantage of the prover is $2p-1$ where $p$ is its winning probability.

By using techniques in \cite{NecessityCollapsing,CollapsingWithoutRoUnruh}, we reduce winning $\Collapse$ to winning $\CCB$. More specifically, we show that an efficient algorithm that wins $\CCB$ with advantage $\epsilon$ can be used to construct another efficient algorithm that wins $\Collapse$ with advantage at least $\epsilon(\epsilon-1/\abs{\Btau})$, which is roughly $\epsilon^2$ since $1/\abs{\Btau}$ is very small for the values of $\tau$ we will consider. We generalize techniques in \cite{NecessityCollapsing,CollapsingWithoutRoUnruh} to work for challenge sets of size $>2$, which is necessary since the challenge set in the $\CCB$ experiment, $\Btau$, generally has size $>2$. The key idea of first applying the quantum algorithm for winning $\CCB$ to the uniform superposition of all challenges remains the same.

\paragraph{Step (iii): $\Collapse \leftarrow \MLWE$.} We build on techniques in \cite{QuantumMoneyLiuMontgomeryZhandry,FiatShamirLiuZhandry} to reduce winning $\Collapse$ to winning $\MLWE$. More specifically, we show that an efficient algorithm that wins $\Collapse$ with advantage $\epsilon$ can be used to construct another efficient algorithm that solves $\MLWE_{l,m,\eta}$ with advantage at least $\epsilon/4$. Given a quantum state supported on $y\in R_q^l$ with $Ay = z$ and $\norminfty{y} \leq \gamma$, as promised in the $\Collapse$ experiment, \cite{QuantumMoneyLiuMontgomeryZhandry,FiatShamirLiuZhandry} considers the following two measurements. Sample $b\in R_q^l$ from one of the two distributions defined in $\MLWE$ (see \cref{eq:mlwe}), compute a rounded version of $b\cdot y$ in a separate register, and measure that register. When $n=1$, \cite{QuantumMoneyLiuMontgomeryZhandry} shows that the effect of the measurement in one case is close to the computational basis measurement and in the other case is close to doing nothing. Therefore, an algorithm for winning $\Collapse$ can be used to solve $\MLWE$. Our work extends \cite{QuantumMoneyLiuMontgomeryZhandry} to arbitrary $n$ provided $q = 1 \mod 2n$. The extension relies on the fact that each coefficient of $b\cdot \Delta$, where $0\neq \Delta \in R_q$ and $b$ is chosen uniformly at random from $R_q$, is uniformly random in $\mathbb{Z}_q$. (This is despite the fact that $b\cdot \Delta$ is generally not uniformly random in $R_q$.) We establish this fact using the explicit form of the isomorphism between $R_q$ and $\mathbb{Z}_q^n$ when $q = 1 \mod 2n$.

Finally, in \cref{sec:concrete}, we propose explicit sets of parameters using $n=512$ and $q \approx 2^{43.5}$ such that $q = 1\mod 2n$. These sets of parameters achieve different security levels based on \cref{thm:stmsisinformal}. We compare our sets of parameters with sets proposed by the $\Dilithium$ specifications \cite{Dilithium} and the $\DilithiumQROM$ construction of \cite{Concrete}. We also compute the differences in the number of $\mathbb{Z}_q$-operations caused by using the NTT on $q=1\mod 2n$ for ring element multiplication compared to a Hybrid-NTT on $q=5\mod 8$.  We find that our public key and signatures sizes are $\approx 11.4\times$ and $\approx 3.2\times$ larger, respectively, than the heuristically chosen parameters in the original $\Dilithium$ \cite{Dilithium}.  Compared to \cite{Concrete}, our parameters yield an increase in 
public key size and signature size of 
$\approx 2.9\times$ and $\approx 1.3\times$, respectively, while yielding a significant decrease (because of the different structure of $R_q$) in the number of $\mathbb{Z}_q$-operations used in key generation, signing, and verification.

\cref{thm:stmsisinformal} proves security for $\Dilithium$ in a range of parameters that preserves the algebraic structure of the original protocol \cite{Dilithium}.  Future work could explore how to optimize our approach to obtain provably secure parameters that are closer to those proposed in \cite{Dilithium} for practical use. We also seek to generalize these results to other signature schemes that utilize the Fiat-Shamir transform.

\section{Preliminaries}
\label{sec:prelim}

$\mathbb{N}$ denotes the set of positive integers. For $k \in 
\mathbb{N}$, $[k]$ denotes the set $\{1,\dots,k\}$. An alphabet refers to a finite non-empty set. Given an alphabet $S$, the notation $s\leftarrow S$ denotes selecting an element $s$ uniformly at random from $S$. Given two alphabets $A$ and $B$, the notation $B^A$ denotes the set of functions from $A$ to $B$. We write the concatenation of arbitrary strings $a, b$ as $a \concat b$. Given matrices $A_1,\dots,A_n$ of the same height, $\left[A_1 \vert A_2 \vert ... \vert A_n \right]$ denotes the matrix that consists of the $A_i$ matrices placed side by side. $\log$ refers to the base-$2$ logarithm.

We always reserve the symbol $q$ for an odd prime and $n$ for a positive integer that is a power of $2$. $R_{q}$ denotes the ring $\Z_q[X]/(X^n+1)$ (following the convention in other $\Dilithium$ literature \cite{Dilithium,Concrete}, we leave the $n$-dependence implicit). For $k\in \mathbb{N}$, a primitive $k$th root of unity in $\mathbb{Z}_q$ is an element $x\in \mathbb{Z}_q$ such that $x^k = 1$ and $x^j \neq 1$ for all $j\in [k-1]$; such elements exist if and only if $q = 1 \modulo k$. Given $r\in \mathbb{Z}_q$, we define $r \modpm q$ to be the unique element $r'\in \mathbb{Z}$ such that $-(q-1)/2\leq r'\leq (q-1)/2$ and $r' = r \mod q$. For any $r = a_0 + a_1 X + \dots + a_{n-1}X^{n-1}\in R_q$, we define $\abs{r}_i \coloneqq \abs{a_i \modpm q}$ for all $i\in \{0,1,\dots,n-1\}$ and $\norm{r}_\infty \coloneqq \max_i \abs{r}_i$. For $r \in R_q^m$, we define $\norm{r}_\infty \coloneqq \max_{i\in [m]} \norm{r_i}_{\infty}$. For $\eta\in \mathbb{N}$, $S_\eta$ denotes the set $\{r \in R_q \mid \norminfty{r}\leq \eta\}$. For $\tau \in \mathbb{N}$, $\Btau$ denotes the set of all elements $r \in R_q$ such that $\norminfty{r} = 1$ and $r$ has exactly $\tau$ nonzero coefficients.
 We note that  $\abs{\Btau} = 2^\tau \binom{n}{\tau}$.

\subsection{Quantum computation}

A (quantum) state, or density matrix, $\rho$ on $\mathbb{C}^d$ is a positive semi-definite matrix in $\mathbb{C}^{d\times d}$ with trace $1$. A pure state is a state of rank $1$. Since a pure state can be uniquely written as $\ketbrasame{\psi}$ where $\ket{\psi}\in \mathbb{C}^d$ and $\bra{\psi}\coloneqq \ket{\psi}^\dagger$, we usually refer to a pure state by just $\ket{\psi}$. A (projective) measurement is a set $\calP = \{P_1,\dots,P_k\} \subseteq \mathbb{C}^{d\times d}$ such that $\sum_{i=1}^k P_i = 1$ and the operators $P_i$ satisfy $P_i = P_i^\dagger = P_i^2$ and $P_i P_j = 0$ for any $j \neq i$. The effect of performing such a measurement on a quantum state $\rho$ is to produce the density matrix $\sum_{i=1}^k P_i\rho P_i$. 

A register is either (i) an alphabet $\Sigma$ or (ii) an $m$-tuple $X = (Y_1,\dots,Y_m)$ where $m\in \mathbb{N}$ and $Y_1,\dots,Y_m$ are alphabets. 
\begin{enumerate}[label=Case (\roman*)]
    \item The size of the register is $\abs{\Sigma}$, a density matrix on the register refers to a density matrix on $\mathbb{C}^{\abs{\Sigma}}$, and the computational basis measurement on the register refers to the measurement $\{\ketbrasame{x}\mid x \in \Sigma\}$, where $\ket{x}$ denotes the vector in $\mathbb{C}^{\Sigma} \cong \mathbb{C}^{\abs{\Sigma}}$ that is $1$ in the $x$th position and zero elsewhere.
    \item The size of the register is $\abs{Y_1}\times \dots \times \abs{Y_m}$, a density matrix on the register refers to a density matrix on $\mathbb{C}^{\abs{Y_1}}\otimes \dots \otimes \mathbb{C}^{\abs{Y_m}}$, and the computational basis measurement on the register refers to the measurement $\{\ketbrasame{y_1}\otimes \dots\otimes \ketbrasame{y_m} \mid y_1\in Y_1,\dots, y_m \in Y_m\}$.
\end{enumerate}

A quantum algorithm $\calA$ is specified by a register $X = (Y_1,\dots,Y_m)$ where $\abs{Y_i} = 2$ for all $i$ and a sequence of elementary gates, i.e., $2^m\times 2^m$ unitary matrices that are of the form
\begin{align}
\begin{split}
    &T\coloneqq \begin{pmatrix}
       1 & 0
       \\
       0 & e^{i\pi/4}
    \end{pmatrix},  \quad 
    H \coloneqq \frac{1}{\sqrt{2}}\begin{pmatrix}
       1 & 1
       \\
       1 & -1
    \end{pmatrix}, \quad \text{or} \quad
    \text{CNOT} \coloneqq \begin{pmatrix}
        1 & 0 & 0 & 0
        \\
        0 & 1 & 0 & 0
        \\
        0 & 0 & 0 & 1
        \\
        0 & 0 & 1 & 0
    \end{pmatrix}
\end{split}
\end{align}
tensored with $2\times 2$ identity matrices.\footnote{When we later consider a quantum algorithm on a register of size $d \in \mathbb{N}$, we mean a quantum algorithm on a register $(Y_1,\dots, Y_m)$ where $\abs{Y_i} = 2$ for all $i$ and $m$ is the smallest integer such that $2^m \geq d$.} The unitary matrix $U$ associated with $\calA$ is the product of its elementary gates in sequence. The time complexity of $\calA$, $\Time(\calA)$, is its number of elementary gates. To perform a computation given an input $x\in \{0,1\}^k$ where $k\leq m$, $\calA$ applies $U$ to the starting state $\ket{\psi_0} \coloneqq \ket{x_1+1}\otimes \dots \otimes \ket{x_k+1}\otimes \ket{1}^{\otimes (m-k)}$ and measures all registers in the computational basis. We also need the definition of a quantum \emph{query} algorithm.

\begin{definition}[Quantum query algorithm]\label{def:qAlg} Let $t\in \mathbb{N}$. A quantum query algorithm $\calA$ using $t$ queries is specified by registers $X,Y,Z$ and a sequence of $t+1$ quantum algorithms $\calA_0,\calA_1,\dots,\calA_t$, each with register $(X,Y,Z)$. The time complexity of $\calA$, $\Time(\calA)$, is $t+\sum_{i=0}^t\Time(\calA_i)$. 

Let $U_i$ denote the unitary associated with $\calA_i$, $\gamma \coloneqq \abs{Y}$, and $\phi\colon Y \to \mathbb{Z}_\gamma$ be a bijection. Given $H \colon X\to Y$, let $O^H$ denote the unitary matrix defined by $O^H\ket{x}\ket{y}\ket{z} = \ket{x}\ket{\phi^{-1}(\phi(y) + \phi(H(x)))} \ket{z}$ for all $(x,y,z)\in X\times Y\times Z$. Then:
\begin{enumerate}
    \item $\calA^{\ket{H}}$ denotes the algorithm with register $(X,Y,Z)$ that computes as follows. Apply $U_0$ to the starting state $\ket{\psi_0}$. Then, for each $i = 1,\dots,t$ in sequence, apply $O^H$ then $U_i$. Finally, measure all registers in the computational basis. 
    \item $\calA^{H}$ denotes the algorithm with register $(X,Y,Z)$ that computes as follows. Apply $U_0$ to the starting state $\ket{\psi_0}$. Then, for each $i = 1,\dots,t$ in sequence, measure register $X$ in the computational basis and apply $O^H$ then $U_i$. Finally, measure all registers in the computational basis. 
\end{enumerate} 
\end{definition}
In the definitions of $\calA^{\ket{H}}$ and $\calA^H$, we have described what it means for a quantum algorithm to make quantum and classical queries to a function $H$, respectively. Under this description, we can naturally define quantum query algorithms that make classical queries to one function and quantum queries to another. Such algorithms are relevant in the security definition of $\Dilithium$ as described in the next subsection.

\subsection{Digital signature schemes}

Let $\param$ be common system parameters shared by all participants. For $\lambda \in \mathbb{N}$ we define $\negl$ as some function such that $\negl \leq 1/\eta^c$ for all constants $c$ and large enough values of $\lambda$.  
\begin{definition}[Digital signature scheme\label{def:sigScheme}] A digital signature scheme is defined by a triple of randomized algorithms $\SIG = (\KeyGen, \Sign, \Verify)$ such that
\begin{enumerate}
    \item The key generation algorithm $\KeyGen(\param)$ outputs a public-key, secret-key pair $(pk,sk)$ such that $pk$ defines the message set $\MSet$.
    \item The signing algorithm $\Sign(sk,m)$, where $m\in \MSet$, outputs a signature $\sigma$.
    \item The verification algorithm $\Verify(pk,m,\sigma)$ outputs a single bit $\{0,1\}$. 
\end{enumerate}
We say $\SIG$ has correctness error $\gamma\geq 0$ if for all $(pk,sk)$ in the support of $\KeyGen(\param)$ and all $m\in \MSet$, 
\begin{align}
    \Pr[\Verify(pk, m, \sigma) = 0 \mid \sigma\leftarrow \Sign(sk, m)] \leq \gamma.
\end{align}
\end{definition}

\begin{definition}[$\UFCMA$ and $\sUFCMA$]\label{def:UF} Let $\SIG = (\KeyGen, \Sign, \Verify)$ be a signature scheme. Let $\calA$ be a quantum query algorithm. Then
\begin{align*}
    \Adv_{\SIG}^{\UFCMA}(\calA) \coloneqq& \Pr[\Verify(pk, m, \sigma) = 1, \ m \notin \textup{SignQ} \mid (pk, sk) \leftarrow  \KeyGen(\param), (m, \sigma)\leftarrow\calA^{\Sign(sk, \cdot)}(pk)],
\\
    \Adv_{\SIG}^{\sUFCMA}(\calA) \coloneqq& \Pr[\Verify(pk, m, \sigma) = 1, \ (m, \sigma)\notin \textup{SignQR} \mid (pk, sk) \leftarrow  \KeyGen(\param), (m, \sigma)\leftarrow\calA^{\Sign(sk, \cdot)}(pk)],
\end{align*}
where $\textup{SignQ}$ is the set of queries made by $\calA$ to $\Sign(sk, \cdot)$ and  $\textup{SignQR}$ is the set of query-response pairs $\calA$ sent to and received from $\Sign(sk,\cdot)$.
\end{definition}
When $\param$ is a function of $\lambda\in \mathbb{N}$, we say that $\SIG$ is $\textrm{(s)}\UFCMA$-secure if for every $\poly$-time quantum query algorithm $\calA$, we have $\Adv_{\SIG}^{\textrm{(s)}\UFCMA}(\calA) \leq \negl$.

In this paper, we use the definition of the $\Dilithium$ signature scheme as specified in \cite{Dilithium}. In the concrete parameters section, \cref{sec:concrete}, we adopt the same notation as in \cite{Dilithium}. The definition of $\Dilithium$ involves a function $H\colon \{0,1\}^*\to \Btau$ that is classically accessible by its $\Sign$ and $\Verify$ algorithms. In the definitions of $\UFCMA$ and $\sUFCMA$ security of $\Dilithium$, we assume that the quantum algorithm $\calA$ has classical query access to $\Sign(sk,\cdot)$ and quantum query access to $H$. Our proof of $\Dilithium$'s security will assume that $H$ can be modeled by a random oracle. 

\subsection{Cryptographic problems and experiments}

We now give the formal definitions of the chosen-coordinate binding and collapsing experiments mentioned in the introduction.  More general versions of these definitions can be found in, e.g., \cite{UnruhStrict,NecessityCollapsing}.

In \cref{subsec:knownsecurity} we defined the $\STMSIS$ problem (\cref{prob:STMSIS}).
Now we define a ``plain'' version of $\STMSIS$, where the input matrix is not given in Hermite Normal Form. First reducing $\STMSIS$ from $\nHSTMSIS$ will be convenient later on.

\begin{definition}[$\nHSTMSIS$]\label{prob:nHSTMSIS}
Let $\tau, m,l,\gamma \in \mathbb{N}$ and $H \colon \{0,1\}^* \to B_\tau$. The advantage of a quantum query algorithm $\calA$ for solving \newline $\nHSTMSIS_{H,\tau,m,l,\gamma}$ is defined as
\begin{equation}
\Adv^{\snHSTMSIS}_{H,\tau,m,l,\gamma}(\calA) \coloneqq \Pr\Bigl[H( A y  \concat M) = y_l \wedge \norminfty{y} \leq \gamma \bigm|A\leftarrow R_q^{m\times l}, \ (y, M) \leftarrow\calA^{\ket{H}}(A) \Bigr].
\end{equation}
\end{definition}

\begin{definition}[Chosen-Coordinate Binding ($\CCB$)]\label{def:ccb}
Let $\tau, m,l,\gamma \in \mathbb{N}$. The advantage of a quantum algorithm $\calA = (\calA_1,\calA_2)$ for winning $\CCB_{\tau,m,l,\gamma}$, denoted $\Adv^{\CCB}_{\tau,m,k,\gamma}(\calA)$, is defined as the probability that the experiment below outputs $1$.
\begin{center}
\fbox{\begin{minipage}{0.8\textwidth}
\textbf{\textup{Experiment}} $\CCB_{\tau,m,l,\gamma}$. 
\begin{enumerate}[topsep=3pt]
    \item Sample $A \leftarrow R_q^{m\times l}$.
    \item $(z,T) \leftarrow\calA_1(A)$, where $z\in R_q^m$ and $T$ is an arbitrary register.
    \item Sample $c\leftarrow \Btau$. 
    \item $y\leftarrow\calA_2(T,c)$, where $y\in R_q^{l}$.
    \item Output $1$ if $A y = z$, $\norminfty{y} \leq \gamma$, and $y_l = c$.
\end{enumerate}
\end{minipage}}
\end{center}
When $\tau, m, l, \gamma$ are functions of $\lambda\in \mathbb{N}$, we say that the $\MSIS$ hash function is chosen-coordinate binding (CCB) if for every $\poly$-time quantum algorithm $\calA$, $\Adv^{\CCB}_{\tau, m, l, \gamma}(\calA) \leq 1/\abs{B_\tau} + \negl$.
\end{definition}

\begin{definition}[Collapsing ($\Collapse$)]\label{def:collapse}
Let $m,l,\gamma \in \mathbb{N}$. The advantage of a quantum algorithm $\calA = (\calA_1,\calA_2)$ for winning $\Collapse_{m,l,\gamma}$, denoted $\Adv^{\Collapse}_{m,l,\gamma}$, is defined as $2p-1$ where $p$ is the probability the experiment below outputs $1$.
\begin{center}
\fbox{\begin{minipage}{0.8\textwidth}
\textbf{\textup{Experiment}} $\Collapse_{m,l,\gamma}$. 
\begin{enumerate}[topsep=3pt]
    \item Sample $A \leftarrow R_q^{m\times l}$.
    \item $(Y, Z, T) \leftarrow \calA_1(A)$, where $Y$ is a register on $R_q^l$, $Z$ is a register on $R_q^m$, and $T$ is an arbitrary register.
    \item Sample $b \leftarrow \{0,1\}$. If $b=1$, measure $Y$ in the computational basis.
    \item $b'\leftarrow \calA_2(Y,Z,T)$.
    \item Output $1$ if $b'=b$.
\end{enumerate}
\end{minipage}}
\end{center}
We say $\calA$ is valid if the state on the register $(Y,Z)$ output by $\calA_1$ in step 2 is supported on elements $(y,z)\in R_q^l\times R_q^m$ such that $A y = z$ and $\norminfty{y} \leq \gamma$. When $m,l,\gamma$ are functions of $\lambda \in \mathbb{N}$, we say that the $\MSIS$ hash function is collapsing if for every $\poly$-time quantum algorithm $\calA$, $\Adv^{\Collapse}_{m, l, \gamma}(\calA) \leq 1/2 + \negl$.
\end{definition}

\section{Security proof for \tSTMSIS}
\label{sec:security_stmsis}

The main result of this subsection is the following theorem which follows from \cref{prop:stmsis_nhstmsis,prop:nhstmsis_ccb,prop:ccb_collapsing,prop:collapsing_mlwe}. 

\begin{theorem}[$\STMSIS$ security]\label{thm:stmsis}
\nopagebreak
Let $m,k,\tau, \gamma,\eta \in \mathbb{N}$. Suppose $q\geq 16$, $q = 1 \modulo 2n$, and $2\gamma \eta n (m+k) < \floor{q/32}$. Suppose that there exists a quantum query algorithm $\calA$ for solving $\STMSIS_{H,\tau,m,k,\gamma}$ using $Q$ queries with expected advantage $\epsilon$ over uniformly random\footnote{Let $U \subset\{0,1\}^*$ be the query set of $\calA$, i.e., the finite subset of elements in $\{0,1\}^*$ that $\calA$ could possibly query (in particular $\abs{U} \leq 2^{O(\Time(\calA))})$. By uniformly random $H \colon \{0,1\}^* \to \Btau$, we mean that $H$ restricted to domain $U$ is uniformly random.} $H\colon \{0,1\}^* \to \Btau$. Then, for all $w \in \mathbb{N}$, there exists a quantum algorithm $\mathcal{B}$ that solves $\MLWE_{m+k,m,\eta}$ with advantage at least
\begin{align}\label{eq:stmsis}
    \frac{\epsilon-nq^{-k}}{4(2Q+1)^2}\Biggl(\frac{\epsilon-nq^{-k}}{(2Q+1)^2} - \frac{1}{\abs{\Btau}}\Biggr) -\frac{1}{4}\frac{1}{3^w}.
\end{align}
Moreover, $\Time(\calB) \leq  \Time(\calA) + \polytext(\log \abs{\Btau}, w,n,\log q,m,k)$.
\end{theorem}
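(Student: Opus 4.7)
The plan is to chain the four propositions \cref{prop:stmsis_nhstmsis,prop:nhstmsis_ccb,prop:ccb_collapsing,prop:collapsing_mlwe} in sequence, tracking the loss in advantage at each step. Starting from a $\STMSIS_{H,\tau,m,k,\gamma}$ adversary $\calA$ with expected advantage $\epsilon$ and $Q$ queries, I would proceed through the tower
\[
\STMSIS \overset{(i)}{\longleftarrow} \nHSTMSIS \overset{(ii)}{\longleftarrow} \CCB \overset{(iii)}{\longleftarrow} \Collapse \overset{(iv)}{\longleftarrow} \MLWE,
\]
where each arrow is a black-box reduction supplied by one of the cited propositions. No new analysis beyond composition is needed at the level of \cref{thm:stmsis} itself; the work is in assembling the bounds carefully.

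Quantitatively, step $(i)$ replaces the Hermite-normal-form assumption on the $\STMSIS$ matrix by a fully random matrix in $R_q^{m\times(m+k)}$. Since a uniform matrix has an invertible $m \times m$ submatrix with probability at least $1 - nq^{-k}$, the $\nHSTMSIS$ advantage drops to $\epsilon_1 \coloneqq \epsilon - nq^{-k}$. Step $(ii)$ applies the measure-and-reprogram framework of \cite{Reprogram}, which converts a $Q$-query $\nHSTMSIS$ solver into a $\CCB$ adversary at a cost of $(2Q+1)^{-2}$, yielding $\epsilon_2 \coloneqq \epsilon_1/(2Q+1)^2$. Step $(iii)$ uses the generalization of \cite{NecessityCollapsing,CollapsingWithoutRoUnruh} from binary challenges to the set $\Btau$: the $\Collapse$ adversary runs $\calA_1^{\CCB}$ to obtain $(z,T)$, then applies $\calA_2^{\CCB}$ in superposition over challenges, measures, and compares against a freshly sampled challenge. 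This yields $\Collapse$ advantage at least $\epsilon_2(\epsilon_2 - 1/\abs{\Btau})$. Step $(iv)$ uses the $R_q$-extension of \cite{QuantumMoneyLiuMontgomeryZhandry,FiatShamirLiuZhandry}: measuring a rounded inner-product register, where the inner-product vector is drawn from one of the two $\MLWE$ distributions, simulates either a computational-basis measurement on the $Y$-register or essentially a no-op. Composing gives the claimed bound \eqref{eq:stmsis}: multiplying $\epsilon_2(\epsilon_2 - 1/\abs{\Btau})$ by $1/4$ and subtracting an additive error $\tfrac{1}{4} \cdot 3^{-w}$ (controlled by the truncation parameter $w$ in the rounded measurement) reproduces exactly the right-hand side of the theorem.

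The running-time accounting is straightforward: $\calB$ invokes $\calA$ once inside the reductions of steps $(i)$--$(iii)$, and then adds only polynomial overheads for sampling, reprogramming bookkeeping, the generalized collapsing procedure, and the rounded-inner-product measurement in step $(iv)$, giving the stated $\Time(\calB) \leq \Time(\calA) + \polytext(\log\abs{\Btau}, w, n, \log q, m, k)$.

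The main conceptual obstacle does not lie in the composition itself but in the ingredients; in particular, it is packaged inside \cref{prop:collapsing_mlwe}. The argument of \cite{QuantumMoneyLiuMontgomeryZhandry} was formulated only for $n=1$, and extending it to general $R_q$ relies on $q = 1 \modulo 2n$ via the NTT isomorphism $R_q \cong \Z_q^n$, which guarantees that each coefficient of $b \cdot \Delta$ is marginally uniform in $\Z_q$ for any fixed nonzero $\Delta \in R_q$ when $b$ is uniform in $R_q$. This structural fact is exactly why the hypotheses $q = 1 \modulo 2n$ and $2\gamma\eta n(m+k) < \floor{q/32}$ appear in the theorem: the former makes the NTT-based marginalization valid, while the latter keeps the rounded inner product accurate enough to distinguish a true computational-basis measurement from a noisy $\MLWE$-induced perturbation. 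With all four propositions in hand, the proof of \cref{thm:stmsis} reduces to stringing the bounds together as above.
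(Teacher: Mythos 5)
Your proposal is correct and follows exactly the same route as the paper: the theorem is proved by composing \cref{prop:stmsis_nhstmsis,prop:nhstmsis_ccb,prop:ccb_collapsing,prop:collapsing_mlwe} in sequence with $l = m+k$, and the quantitative tracking you describe (first $\epsilon \mapsto \epsilon - nq^{-k}$, then division by $(2Q+1)^2$, then $\epsilon_2 \mapsto \epsilon_2(\epsilon_2 - 1/\abs{\Btau})$, then $\epsilon_3 \mapsto (\epsilon_3 - 3^{-w})/4$) reproduces \eqref{eq:stmsis} precisely. The accounting of hypotheses ($q = 1 \modulo 2n$ for \cref{lem:ring_isomorphism}, $2\gamma\eta n(m+k) < \floor{q/32}$ for the rounding argument in \cref{prop:collapsing_mlwe}) and of running time is also accurate.
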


Assuming that the choice of parameters as functions of the security parameter $\lambda$ is such that $nq^{-k} = \negl$, $1/\abs{\Btau} = \negl$, and $w = \poly$, \cref{thm:stmsis} shows that the advantage of $\calB$ is roughly $\Omega(\epsilon^2/Q^4)$.

The proof of \cref{thm:stmsis} proceeds by the following sequence of reductions, which we have labeled by the number of the section in which they are proven:
\begin{align*}
    \STMSIS \overset{\ref{sec:plainstmsis_to_stmsis}}{\longleftarrow} \nHSTMSIS \overset{\ref{sec:ccb_to_plainstmsis}}{\longleftarrow} \CCB \overset{\ref{sec:collapse_to_ccb}}{\longleftarrow} \Collapse \overset{\ref{sec:mlwe_to_collapse}}{\longleftarrow} \MLWE.
\end{align*}
First, we establish some properties of $R_q$ that will be used in \cref{sec:plainstmsis_to_stmsis,sec:mlwe_to_collapse}.

\subsection{Properties of \texorpdfstring{$R_q$}{Rq}}
\label{sec:ring_properties}

\begin{lemma}\label{lem:primitive_sum}
    Suppose $q = 1 \modulo 2n$. Let $w$ be a primitive $(2n)$-th root of unity in $\Z_q$. Then for all $m\in \Z$ such that $0\neq \abs{m}<n$, the following equation holds in $\Z_q$: $\sum_{j=0}^{n-1} w^{2mj} = 0$.
\end{lemma}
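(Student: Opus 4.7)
The plan is to recognize the sum as a geometric series with ratio $\zeta := w^{2m}$, and then verify the two conditions needed to evaluate it to zero: (a) $\zeta \neq 1$ in $\mathbb{Z}_q$, so that $\zeta - 1$ is invertible, and (b) $\zeta^n = 1$, so that the numerator of the geometric series formula vanishes.

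Both conditions follow directly from the hypothesis that $w$ has order exactly $2n$ in $\mathbb{Z}_q^\times$. For (b): $\zeta^n = w^{2mn} = (w^{2n})^m = 1^m = 1$. For (a): if $w^{2m} = 1$, then by definition of ``primitive $(2n)$-th root of unity,'' the order $2n$ of $w$ must divide $2m$, equivalently $n \mid m$; combined with $|m| < n$ this forces $m = 0$, contradicting the hypothesis $m \neq 0$. (Note that the definition of primitive root of unity given in the preliminaries only states $w^j \neq 1$ for $j \in [2n-1]$, but since orders are always positive we apply the argument to $|m|$ and $-|m|$ as appropriate — since $w^{2m} = 1$ iff $w^{-2m} = 1$, WLOG $m > 0$.)

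Given (a), the element $\zeta - 1 \in \mathbb{Z}_q$ is a nonzero element of the field $\mathbb{Z}_q$ (recall $q$ is prime), hence invertible. So we may write
\begin{equation*}
    \sum_{j=0}^{n-1} w^{2mj} \;=\; \sum_{j=0}^{n-1} \zeta^j \;=\; \frac{\zeta^n - 1}{\zeta - 1} \;=\; \frac{1 - 1}{\zeta - 1} \;=\; 0
\end{equation*}
in $\mathbb{Z}_q$, where the second equality is the standard geometric series identity (valid in any commutative ring once $\zeta - 1$ is invertible) and the third uses (b).

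There is no real obstacle here; the only subtlety is taking care that the ``primitive root'' hypothesis indeed rules out $\zeta = 1$ for the full range $0 < |m| < n$ (not just $0 < m < n$), which is handled by the symmetry observation above. The proof is three or four lines once this is laid out.
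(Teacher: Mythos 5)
Your proof is correct and is essentially the same argument as the paper's: both treat $\sum_{j=0}^{n-1} w^{2mj}$ as a geometric series, use $w^{2n}=1$ to make the ``numerator'' vanish, and use primitivity of $w$ together with $0\neq|m|<n$ to conclude that $w^{2m}\neq 1$. The paper phrases the final step via a telescoping product $(1-w^{2m})\cdot\sum_j w^{2mj}=1-w^{2mn}$ and invokes the integral-domain property of $\Z_q$, whereas you divide directly by $\zeta-1$; these are the same argument in two notations. Your explicit handling of the case $m<0$ (reducing to $|m|$ via $w^{2m}=1\iff w^{-2m}=1$) is a small extra bit of care that the paper leaves implicit.
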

\begin{proof}
    Consider the following equation in $\Z_q$: $(1-w^{2m})\cdot  \sum_{j=0}^{n-1} w^{2mj} = 1-w^{2mn} = 0$, where the first equality uses a telescoping sum and the second uses $w^{2n}=1$. But $1-w^{2m}\neq 0$ since $0\neq \abs{m}<n$ and $w$ is a primitive $(2n)$-th root of unity in $\Z_q$. Therefore, since $\Z_q$ is an integral domain when $q$ is prime, $ \sum_{j=0}^{n-1} w^{2mj} = 0$ as required.
\end{proof}

\begin{lemma}\label{lem:ring_isomorphism}
    Suppose $q = 1 \modulo 2n$. Then, $R_q \cong \Z_q^n$ as algebras over $\Z_q$.\footnote{To be clear, the algebra $\Z_q^n$ over $\Z_q$ refers to the set $\Z_q^n$ equipped with component-wise addition and multiplication, and scalar multiplication defined by $\alpha\cdot (c_0,\dots,c_{n-1}) \coloneqq (\alpha c_0,\dots, \alpha c_{n-1})$, where $\alpha \in \Z_q$ and $(c_0,\dots,c_{n-1})\in \Z_q^n$.}
\end{lemma}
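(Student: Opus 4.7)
The plan is to construct the isomorphism explicitly as the (twisted) evaluation map associated with the Number Theoretic Transform, using the primitive-sum identity of \cref{lem:primitive_sum} to verify bijectivity. First I would use the hypothesis $q = 1 \modulo 2n$ to fix a primitive $(2n)$-th root of unity $w \in \Z_q$, which exists by standard properties of finite fields. Then I would show that $X^n + 1$ splits completely over $\Z_q$: since $w^{2n} = 1$ and $w^n \neq 1$, we have $w^n = -1$, so $(w^{2j+1})^n = (w^n)^{2j+1} = -1$ for each $j \in \{0,1,\dots,n-1\}$, and these $n$ elements are pairwise distinct because $w$ has order $2n$. Hence $X^n + 1 = \prod_{j=0}^{n-1}(X - w^{2j+1})$ in $\Z_q[X]$.

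Next I would define the map $\phi \colon R_q \to \Z_q^n$ by $\phi(f) \coloneqq (f(w), f(w^3), \dots, f(w^{2n-1}))$, which is well-defined because each $w^{2j+1}$ is a root of $X^n + 1$, and which is manifestly a $\Z_q$-algebra homomorphism (evaluation respects addition, multiplication, and scalar action, and maps the class of $1$ to $(1,\dots,1)$). It then suffices to show $\phi$ is bijective. For this I would exhibit an explicit inverse $\psi \colon \Z_q^n \to R_q$ sending $(y_0,\dots,y_{n-1})$ to the polynomial $\sum_{k=0}^{n-1} a_k X^k$ where
\begin{equation}
a_k \coloneqq \frac{1}{n} \sum_{j=0}^{n-1} y_j \, w^{-(2j+1)k}.
\end{equation}
Note $n$ is invertible in $\Z_q$ because $q$ is odd and $n$ is a power of $2$, and $w^{-1}$ exists because $w$ is a unit.

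To verify $\psi \circ \phi = \mathrm{id}$, given $f = \sum_k a_k X^k$ I would substitute $y_j = \sum_k a_k w^{(2j+1)k}$ into the formula for $\psi$ and swap the order of summation, producing $\sum_k a_k \cdot \frac{w^{k-k'}}{n} \sum_{j=0}^{n-1} w^{2j(k-k')}$ as the $k'$-th recovered coefficient. Here is where the main (but mild) obstacle lies: I must argue that the inner sum $\sum_{j=0}^{n-1} w^{2j(k-k')}$ vanishes whenever $k \neq k'$ and equals $n$ when $k = k'$. The vanishing is exactly \cref{lem:primitive_sum} applied with $m = k - k'$, which satisfies $0 \neq |m| < n$ because $0 \le k, k' \le n-1$; the diagonal case is immediate. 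This collapses the double sum to $a_{k'}$, proving $\psi \circ \phi = \mathrm{id}$. Since $\phi$ and $\psi$ are $\Z_q$-linear maps between finite-dimensional $\Z_q$-modules of the same dimension $n$, this suffices to conclude $\phi$ is a $\Z_q$-algebra isomorphism, giving $R_q \cong \Z_q^n$ as required.
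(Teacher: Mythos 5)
Your proposal is correct and follows essentially the same route as the paper's own proof: both realize the isomorphism as the (twisted) evaluation/NTT map at the odd powers $w, w^3, \dots, w^{2n-1}$ of a primitive $2n$-th root of unity, exhibit the explicit Vandermonde-type inverse with $n^{-1}w^{-(2j+1)k}$ entries, and invoke \cref{lem:primitive_sum} to collapse the off-diagonal sums. The only cosmetic difference is that you define $\phi$ directly as an evaluation map (with well-definedness and multiplicativity immediate), whereas the paper first writes $\phi$ as a matrix acting on coefficients and then remarks it coincides with evaluation; the substance and all verification steps are identical.
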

\begin{proof}
    For $q$ prime, the multiplicative group $\Z_q^*$ of non-zero elements in $\Z_q$ is cyclic. Let $g$ be a generator of $\Z_q^*$. Let $w\coloneqq g^{(q-1)/(2n)}$, which is well-defined since $q=1 \mod 2n$. Define the mapping $\phi\colon R_q \to \Z_q^n$ by:
    \begin{align}
        \phi(p(x)) = \begin{pmatrix}
            1 & w & \dots & w^{n-1}
            \\
            1 & w^3 & \dots & w^{3(n-1)}
            \\
            \vdots & &\ddots &\vdots 
            \vspace{2pt}
            \\
            1 & w^{(2n-1)} & \dots & w^{(2n-1)(n-1)}
        \end{pmatrix} \begin{pmatrix}
            a_0\\
            a_1\\
            \vdots\\
            a_{n-1}
        \end{pmatrix},
    \end{align}
    where $p(x) \coloneqq a_0 + a_1x + \dots + a_{n-1}x^{n-1}$. It is clear that $\phi$ is a linear map. To see that $\phi$ is homomorphic with respect to multiplication, observe that for any $\tilde{p}(x)\in \Z_q[x]$ such that $p(x) = \tilde{p}(x) \modulo (x^n+1)$, we have
\begin{align}
       \phi(p(x)) = \bigl(\tilde{p}\bigl(w^1\bigr),\tilde{p}\bigl(w^3\bigr),\dots, \tilde{p}\bigl(w^{(2n-1)}\bigr)\bigr),
\end{align}
since $(w^{2k-1})^n + 1 = 0$ in $\Z_q$ for all $k \in [n]$.

To see that $\phi$ is bijective, observe its explicit inverse $\phi'\colon \Z_q^n \to R_q$, defined by 
    \begin{align}
    \phi'(c_0,\dots,c_{n-1}) =& a_0 + a_1 x + \dots + a_{n-1}x^{n-1}, \quad \text{where}
    \\
        \begin{pmatrix}
            a_0\\
            a_1\\
            \vdots\\
            a_{n-1}
        \end{pmatrix} \coloneqq& n^{-1}\begin{pmatrix}
            1 & 1 & \dots & 1
            \\
            w^{-1} & w^{-3} & \dots & w^{-(2n-1)}
            \\
            \vdots & &\ddots &\vdots 
            \vspace{2pt}
            \\
            w^{-(n-1)} & w^{-3(n-1)} & \dots & w^{-(2n-1)(n-1)}
        \end{pmatrix} \begin{pmatrix}
            c_0\\
            c_1\\
            \vdots\\
            c_{n-1}
        \end{pmatrix}
    \end{align}
    and $n^{-1}$ denotes the multiplicative inverse of $n$ in $\mathbb{Z}_q$, which exists since $q = 1 \modulo  2n \implies n < q$. Since $w$ is a primitive $(2n)$-th root of unity in $\Z_q$, \cref{lem:primitive_sum} implies that the matrices corresponding to $\phi$ and $\phi'$ multiply to the identity in $\Z_q$. Therefore, $\phi'$ is the inverse of $\phi$.
\end{proof}

\subsection{Reduction from \tnHSTMSIS to \tSTMSIS}
\label{sec:plainstmsis_to_stmsis}

\begin{proposition}\label{prop:stmsis_nhstmsis}
Suppose $q = 1 \modulo 2n$. Let $m,k,\gamma,\tau \in \mathbb{N}$ and $H \colon \{0,1\}^* \to B_\tau$. Suppose that there exists a quantum query algorithm $\calA$ using $Q$ queries that solves $\STMSIS_{H,\tau,m,k,\gamma}$ with advantage $\epsilon$, then there exists a quantum query algorithm $\calB$ using $Q$ queries for solving $\nHSTMSIS_{H,\tau,m,m+k,\gamma}$ with advantage at least $\epsilon - n/q^k$. Moreover, $\Time(\calB) \leq \Time(\calA) + O(n\log (q) \cdot mk \min(m,k))$.
\end{proposition}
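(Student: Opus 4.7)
The plan is a Hermite-normal-form (HNF) reduction: given the uniformly random $\nHSTMSIS$ matrix $A' \in R_q^{m \times (m+k)}$, parse it as $A' = [A_1 \mid A_2]$ with $A_1 \in R_q^{m \times m}$ and $A_2 \in R_q^{m \times k}$, and, whenever $A_1$ is invertible over $R_q$, set $A := A_1^{-1} A_2 \in R_q^{m \times k}$ and run the $\STMSIS$ adversary $\calA$ on input $A$. The crucial algebraic identity
\[
A_1 \cdot [I_m \mid A] \;=\; [A_1 \mid A_1 A] \;=\; [A_1 \mid A_2] \;=\; A'
\]
is what will let us translate any $\STMSIS$ solution for $[I_m \mid A]$ back into an $\nHSTMSIS$ solution for $A'$. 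Since $A_2$ is uniform in $R_q^{m \times k}$ and independent of $A_1$, conditioning on $A_1$ being invertible makes $A$ uniform in $R_q^{m\times k}$, matching the distribution $\calA$ expects.

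The main technical ingredient is the simulation of $\calA$'s quantum random oracle. I would define $H^*(w \concat M) := H(A_1 w \concat M)$ and answer each of $\calA$'s quantum queries to $H^*$ by conjugating a single quantum query to $H$ by the efficient, reversible unitary that implements the bijection $w \mapsto A_1 w$ on the input register; this preserves the query count $Q$. Because $A_1$ is invertible, $w \mapsto A_1 w$ is a bijection on $R_q^m$, so $H^*$ inherits the uniformly random distribution of $H$, and $\calA$'s view is exactly that of a fresh $\STMSIS$ experiment on input $A$; it therefore succeeds with probability $\epsilon$. On success it returns $(y, M)$ with $H^*([I_m \mid A] y \concat M) = y_{m+k}$ and $\norminfty{y} \le \gamma$; unfolding the definition of $H^*$ and applying the identity above yields $H(A' y \concat M) = y_{m+k} = y_l$ (with $l = m + k$), so $\calB$ outputs $(y, M)$ and wins $\nHSTMSIS$.

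The hard part is bounding the loss from the event that $A_1$ is singular to the advertised $n/q^k$. The essential tool is the isomorphism $R_q \cong \Z_q^n$ from \cref{lem:ring_isomorphism}: a uniform $A_1 \in R_q^{m\times m}$ decomposes into $n$ independent uniform $\Z_q^{m\times m}$ matrices, and is invertible in $R_q$ iff each of its $n$ components is invertible in $\Z_q$. The crude union bound using $\Pr[\F_q^{m\times m} \text{ singular}] \le \sum_{j=1}^m q^{-j} = O(1/q)$ only gives loss $O(n/q)$, so to reach $n/q^k$ I would strengthen the reduction to search all $\binom{m+k-1}{m}$ subsets of $m$ columns from among the first $m+k-1$ columns of $A'$ for a candidate $A_1$ invertible in $R_q$, keeping column $m+k$ pinned in the last position (so that the meaning of $y_{m+k}$ is preserved) and tracking the induced column permutation when translating $\calA$'s output back. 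Then, via a careful component-by-component analysis, the failure probability can be controlled by a union bound over the $n$ components of the event that a uniform $\F_q^{m\times(m+k-1)}$ matrix has rank less than $m$, whose probability is $\sum_{j=1}^m q^{-(k-1+j)} = O(1/q^k)$, giving the claimed $n/q^k$. The runtime bound $\Time(\calB) \le \Time(\calA) + O(n \log q \cdot mk \min(m,k))$ is immediate from the cost of one $R_q$-matrix inversion and one $R_q$-matrix multiplication in the coefficient representation.
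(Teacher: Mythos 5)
Your reduction skeleton (write $A' = [A_1 \mid A_2]$, form $A := A_1^{-1}A_2$, reprogram the oracle by $H^\ast(w\concat M) := H(A_1 w \concat M)$, and note that $A$ is uniform conditioned on $A_1$ being invertible) is correct and usefully more explicit than the paper's very terse proof. You are also right to flag that requiring the first $m$ columns $A_1$ to be invertible over $R_q$ gives a failure probability on the order of $n \sum_{j=1}^{m} q^{-j} = \Theta(n/q)$, not $n/q^k$; the paper's parenthetical ``(i.e., rank $m$)'' silently conflates ``$B$ has row-echelon form $[I_m \mid B']$'' with ``$B$ has rank $m$,'' and the probability bound $1 - 1/q^k$ it quotes is for the latter.

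However, the fix you propose does not close the gap. You want to search $m$-subsets $S \subseteq [m+k-1]$ for one with $A'_S$ invertible over $R_q$, and bound the failure probability by a union bound over the $n$ $\Z_q$-components of the event that the first $m+k-1$ columns have rank $< m$. But ``every component has rank $m$'' does \emph{not} imply ``there exists a single subset $S$ invertible in every component simultaneously,'' which is what the reduction needs, because different components can have different $m$-subsets as bases. A concrete example with $m = 1$, $k = 2$, $n = 2$, $q \geq 5$: under $R_q \cong \Z_q^2$ take $a_1 \cong (0,1)$ and $a_2 \cong (1,0)$. Both $\Z_q$-components of the $1\times 2$ matrix $[a_1 \mid a_2]$ have full rank $1$, so your union-bound event never fires, yet neither $a_1$ nor $a_2$ is a unit in $R_q$, so no valid $1$-subset of $\{1,2\}$ exists and the reduction aborts. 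In fact for $m = 1$ the true failure probability of the subset-search reduction is $\bigl(1 - (1-1/q)^n\bigr)^k \approx (n/q)^k$, which for $n \geq 2$ and $k \geq 2$ exceeds $n/q^k$. So the $n/q^k$ bound cannot be obtained this way, and the same cross-component issue is lurking in the paper's proof (which needs the first $m$ columns to be invertible over $R_q$ but bounds the probability of a weaker event). A correct proof must either settle for a weaker error term on the order of $n/q$ (which is what $A_1 \leftarrow \mathrm{GL}_m(R_q)$ actually buys and still suffices for the paper's asymptotics since $n \ll q$), or give a genuinely different argument for why an $R_q$-invertible $m$-subset avoiding the last column exists except with probability $n/q^k$.
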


\begin{proof}
    The probability that a uniformly random $B \leftarrow \Z_q^{m\times (m+k)}$  has row-echelon form $[I_m|B']$ (i.e., rank $m$) is at least $(1-1/q^k)$. Therefore, by \cref{lem:ring_isomorphism}, the probability that a uniformly random $A \leftarrow R_q^{m\times (m+k)}$ does not have row-echelon form $[I_m|A']$ is at most $1-(1-1/q^k)^n \leq n/q^k$. When $A$ has row-echelon form $[I_m|A]$, $\calB$ first performs row reduction and then runs $\calA$. Since the time to perform row reduction on $A$ is $O(n\log (q) \cdot mk \min(m,k))$, the proposition follows.
\end{proof}

\subsection{Reduction from \tCCB to \tnHSTMSIS}
\label{sec:ccb_to_plainstmsis}

Let $S, U, C, R$ be alphabets, $V\colon S\times U\times C\times R \to \{0,1\}$, and $\calB = (\calB_1,\calB_2)$ be a quantum algorithm. We define the $\Sigma$-experiment by:
\begin{center}
\fbox{\begin{minipage}{0.8\textwidth}
\textbf{$\Sigma$-\textup{experiment}}. 
\begin{enumerate}[topsep=3pt]
    \item $s\leftarrow S$.
    \item $(u, T)\leftarrow \calB_1(s)$, where $u\in U$ and $T$ is an arbitrary register.
    \item $c \leftarrow C$.
    \item $r \leftarrow \calB_2(T,c)$.
    \item Output $1$ if $V(s,u,c,r) = 1$.
\end{enumerate}
\end{minipage}}
\end{center}
The advantage of $\calB$ for winning the $\Sigma$-experiment is the probability of the experiment outputting $1$.

In this subsection, we use the following theorem from \cite{Reprogram}. 

\begin{theorem}[Measure-and-reprogram~{\cite[Theorem 2]{Reprogram}}]\label{lem:measure_and_reprogram}  Let $\calA$ be a quantum query algorithm using $Q$ queries  that takes input $s\in S$ and outputs $u\in U$ and $r\in R$. Then, there exists a two-stage quantum algorithm $\calB = (\calB_1,\calB_2)$ (not using any queries) such that the advantage of $\calB$ in the $\Sigma$-experiment is at least
\begin{align}
    \frac{1}{(2Q+1)^2} \Pr\Bigl[ V(s,u,H(u),r) \Bigm| H \leftarrow C^U, \ s \leftarrow S, \ (u,r) \leftarrow \calA^{\ket{H}}(s) \Bigr].
\end{align}
Moreover, $\Time(\calB_1) + \Time(\calB_2) \leq \Time(\calA)$.
\end{theorem}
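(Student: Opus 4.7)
The plan is to construct $\calB = (\calB_1, \calB_2)$ that runs $\calA(s)$ internally while intercepting a single query at a uniformly random time step to extract the committed output $u$, and then reprograms the internal oracle at $u$ to the later-received challenge $c$ before completing the simulation to obtain $r$. Since $\calB$ cannot make queries itself, the simulation of $H$ must be done internally (for example via a $2Q$-wise independent function, which is query-indistinguishable from a random oracle to any $Q$-query adversary). Concretely, $\calB_1(s)$ samples $i \in \{0, 1, \ldots, Q\}$ and $b \in \{0, 1\}$ uniformly at random, runs $\calA(s)$ under the internally simulated $H$ for the first $i$ queries, measures the query register in the computational basis to obtain $u$, and outputs $u$ together with the remaining state $T$ (which bundles $i$, $b$, the simulated oracle data, and the post-measurement state of $\calA$). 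Upon receiving $c$, $\calB_2(T, c)$ reprograms $H$ at input $u$ to take value $c$, either including the $(i+1)$-st query itself (if $b = 1$) or only the subsequent queries (if $b = 0$), and completes the simulation of $\calA$, returning its output as $r$.

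The analysis is a hybrid/telescoping argument over query positions. For fixed $s$, I would decompose the successful-execution amplitude of $\calA^{\ket{H}}(s)$ by ``the first query at which the query register is at $u$'', inserted via resolutions of identity $\Pi_u + (I - \Pi_u)$ at each of the $Q$ query points. This expresses the success amplitude as a telescoping sum over $j \in \{0, \ldots, Q\}$ of inner products between a truncated-at-$j$ ``real-$H$'' trajectory and a resumed-from-$j$ ``$H$-reprogrammed-at-$u$-to-$c$'' trajectory. The construction of $\calB$ effectively samples one term of this sum at random, with the bit $b$ controlling whether the reprogramming covers the $(i+1)$-st query itself. A norm / Cauchy--Schwarz estimate, combined with averaging over $b$, then yields the lower bound of $(2Q+1)^{-2}$ times the original success probability. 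The time bound $\Time(\calB_1) + \Time(\calB_2) \leq \Time(\calA)$ follows because $\calB$'s only overhead beyond simulating $\calA$ is a single computational-basis measurement and a local update of the oracle simulation, both of which can be attributed to $\calA$'s own cost under the usual query-free reduction bookkeeping.

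The main obstacle is the quantum analysis. Classically, one could simply watch $\calA$'s queries in a single run and reprogram $H$ at the first one equal to $u$; quantumly, queries are in superposition, so a computational-basis measurement of the query register perturbs the execution, and one cannot even say ``which query touched $u$'' without collapsing the state. The technical heart of the argument is to show that performing exactly one such measurement at a uniformly random time step, followed by reprogramming, preserves the success probability up to a multiplicative $1/(2Q+1)^2$ factor rather than collapsing it to something negligible. Concretely, the hard step is bounding the cross-terms in the telescoping expansion that arise for the ``wrong'' choices of $i$; the role of $b$ is to remove a sign ambiguity in how the reprogramming interacts with the measured query, enabling the bound to be derived by a direct norm argument rather than by a delicate interference calculation. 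This is the content of the result cited from \cite{Reprogram}, and it is what lets us reduce $\nHSTMSIS$ to $\CCB$ in the next subsection.
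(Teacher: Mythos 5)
The paper does not prove this theorem; it is imported verbatim from Don, Fehr, Majenz, and Schaffner \cite[Theorem~2]{Reprogram}, and the only original content in the surrounding text is the remark converting the original running-time bound (which carries a $\polytext(Q,\log\abs{U},\log\abs{C})$ overhead for simulating the oracle with a $2(Q+1)$-wise independent family) into the cleaner $\Time(\calB_1)+\Time(\calB_2)\le\Time(\calA)$ by adopting the convention of \cite{Concrete} that $\calB$ may also query a random oracle at unit cost. So there is no ``paper proof'' to compare against: your sketch is an outline of the external DFMS argument, not of anything argued in this paper.

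As a summary of the DFMS proof, your sketch does capture the essential ingredients --- internal oracle simulation, measuring the query register at a random query position to extract $u$, reprogramming $H(u)\coloneqq c$, a bit $b$ selecting whether the measured query itself sees the reprogrammed oracle, a telescoping expansion over ``first query at $u$'', and a Cauchy--Schwarz/Jensen averaging step producing the quadratic loss. One concrete slip: if $i$ is drawn uniformly from $\{0,\dots,Q\}$ and $b$ \emph{independently} from $\{0,1\}$, the sample space has size $2(Q+1)=2Q+2$, which would yield a $1/(2Q+2)^2$ factor. The DFMS construction actually samples $(i,b)$ uniformly from $(\{0,\dots,Q-1\}\times\{0,1\})\cup\{(Q,0)\}$, of size exactly $2Q+1$; the pair $(Q,1)$ is excluded because for $i=Q$ the value $u$ is read off $\calA$'s final output rather than a query register, so there is no ``before or after reprogramming the $(i+1)$-st query'' distinction. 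This is precisely what produces the $(2Q+1)^2$ denominator in the statement. Also, for $Q$ queries the standard simulation uses a $2(Q+1)$-wise (not $2Q$-wise) independent family, as the paper's remark notes.
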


In the original statement of the theorem,  $\Time(\calB_1)+ \Time(\calB_2)$ is upper bounded by $\Time(\calA)-Q+\polytext(Q, \log(\abs{U}, \log(\abs{C}))$. (The $-Q$ is because our definition of $\Time(\calA)$ includes a $+Q$ term.) The term $\polytext(Q, \log(\abs{U}, \log(\abs{C}))$ accounts for the cost of instantiating $Q$ queries to a $2(Q+1)$-wise independent hash function family from $U$ to $C$. By the well-known Vandermonde matrix method (see, e.g., \cite[Section 6]{SecureEncryptionZhandry}), this cost can be upper bounded by $O(Q^2 \cdot \log(\abs{U})\cdot \log(\abs{C}))$. However, we follow the convention in \cite[Section 2.1]{Concrete} and equate this cost to $Q$ under the fair assumption that $\calB$, like $\calA$, can also query a random oracle at unit cost. 

\begin{proposition}\label{prop:nhstmsis_ccb}
Let $m,l,\gamma,\tau\in \mathbb{N}$. Suppose there exists a quantum query algorithm $\calA$ for solving \newline $\nHSTMSIS_{H,\tau,m,l,\gamma}$ using $Q$ queries with expected advantage $\epsilon$ over uniformly random $H\colon \{0,1\}^* \to \Btau$. Then there exists a quantum algorithm $\calB = (\calB_1,\calB_2)$ for winning $\CCB_{\tau,m,l,\gamma}$ with advantage at least $\epsilon/(2Q+1)^2$. Moreover $\Time(\calB_1) + \Time(\calB_2) \leq \Time(\calA)$.
\end{proposition}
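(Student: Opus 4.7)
The natural approach is to instantiate the generic $\Sigma$-experiment framework of \cref{lem:measure_and_reprogram} with the $\CCB$ experiment. Specifically, I would set $S = R_q^{m\times l}$, $C = \Btau$, $R = R_q^l$, and take $U$ to be the query set of $\calA$ (viewed as strings in $\{0,1\}^*$). Every string $u \in U$ is parsed uniquely as $u = z \concat M$ with $z \in R_q^m$ and $M\in \{0,1\}^*$ (where we fix some canonical encoding of $R_q^m$ and send any ill-formed $u$ to a default $z=0$; such $u$ will never contribute to a win). I would then define the predicate $V(A,u,c,y) = 1$ iff, writing $u = z\concat M$, we have $z = Ay$, $c = y_l$, and $\norminfty{y}\leq \gamma$. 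With this choice, the $\Sigma$-experiment is essentially the $\CCB$ experiment modulo packaging.

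The plan is then to wrap $\calA$ into a query algorithm $\calA'$ that, on input $A$, runs $\calA^{\ket{H}}(A)$ to obtain $(y,M)$ and outputs the pair $(u,r) \coloneqq (Ay\concat M,\ y)$. This wrapping uses no additional queries, and by the definition of $V$ one checks directly that
\begin{align*}
\Pr\Bigl[V\bigl(A,u,H(u),r\bigr)=1 \Bigm| H\leftarrow C^U,\ A\leftarrow S,\ (u,r)\leftarrow {\calA'}^{\ket{H}}(A)\Bigr] = \Adv^{\snHSTMSIS}_{H,\tau,m,l,\gamma}(\calA) = \epsilon,
\end{align*}
since the constraint $z=Ay$ in $V$ is imposed by construction of $u$, and the remaining conditions $H(u)=y_l$ and $\norminfty{y}\leq\gamma$ are exactly the nHSTMSIS winning conditions.

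Applying \cref{lem:measure_and_reprogram} to $\calA'$ then produces a two-stage query-free algorithm $(\calB_1',\calB_2')$ that wins the $\Sigma$-experiment with advantage at least $\epsilon/(2Q+1)^2$. I convert this into a $\CCB$ adversary $(\calB_1,\calB_2)$ as follows: $\calB_1(A)$ runs $\calB_1'(A)$ to obtain $(u,T)$, parses $u=z\concat M$, and outputs $(z,T')$ with $T' \coloneqq (u,T)$; on input $(T',c)$, $\calB_2$ runs $\calB_2'(T,c)$ and returns its output $y$. Whenever $V(A,u,c,y)=1$, the definition of $V$ gives $Ay = z$, $y_l=c$, and $\norminfty{y}\leq \gamma$, so the $\CCB$ experiment outputs $1$; hence $\Adv^{\CCB}_{\tau,m,l,\gamma}(\calB) \geq \epsilon/(2Q+1)^2$. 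The time bound follows from the time bound in \cref{lem:measure_and_reprogram} together with the fact that the extra work performed by $\calB_1,\calB_2$ outside of $\calB_1',\calB_2'$ (parsing $u$, repackaging $T$) is $O(nm\log q)$ and absorbed into the $\Time(\calA)$ upper bound under the paper's convention.

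The only real subtlety I anticipate is the bookkeeping around $V$ and the query set $U$: one must verify that the encoding of $R_q^m$ inside $\{0,1\}^*$ used to define the parsing $u \mapsto (z,M)$ is consistent with how $\calA$ forms its queries $Ay\concat M$, so that the $V$-based probability equals $\epsilon$ exactly rather than up to negligible error. Once the encoding is fixed at the start and reused uniformly, this becomes a routine check, and no obstacle arises from quantumness since all of that is absorbed into the black-box use of \cref{lem:measure_and_reprogram}.
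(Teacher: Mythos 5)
Your proposal is correct and takes essentially the same approach as the paper: both wrap $\calA$ into a query algorithm $\calA'$ that outputs $(Ay \concat M,\ y)$, instantiate the $\Sigma$-experiment of \cref{lem:measure_and_reprogram} with $S=R_q^{m\times l}$, $C=\Btau$, $R=R_q^l$, and the predicate $V(A,u,c,y)=\1[z=Ay,\ \norminfty{y}\leq\gamma,\ y_l=c]$ (with $u$ parsed as $z\concat M$), and observe that this $\Sigma$-experiment coincides with $\CCB_{\tau,m,l,\gamma}$. The extra bookkeeping you spell out (encoding of $R_q^m$, repackaging the $\Sigma$-winner into a $\CCB$ adversary) is a more explicit rendering of what the paper leaves implicit, not a different route.
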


\begin{proof}
The quantum query algorithm $\calA$ for $\nHSTMSIS_{H,\tau,m,l,\gamma}$ takes input $A$ and outputs $(y,M)$. So there exists another quantum query algorithm $\calA'$ using $Q$ queries that outputs
$((A y \concat M), \ y)$.

The first part of the proposition follows from applying \cref{lem:measure_and_reprogram} to $\calA'$ with the following parameter settings which make the $\Sigma$-experiment identical to the $\CCB_{\tau,m,l,\gamma}$ experiment:
\begin{enumerate}
    \item Set $S=R_q^{m\times l}$, $U$ to be the query set of $\calA'$, $C = \Btau$, and $R = R_q^l$.
    \item Set $V\colon R_q^{m\times l}\times U \times \Btau \times R_q^l \to \{0,1\}$ by
    \begin{align}
        V(A,u,c,y) = \1[z = A y, \ \norminfty{y}\leq \gamma, \ y_l = c],
    \end{align}
    where $u\in \{0,1\}^*$ is parsed as $u = (z \concat M)$ with $z \in R_q^m$ and $M \in \{0,1\}^*$.
\end{enumerate}
\end{proof}

\subsection{Reduction from \tCollapse to \tCCB}
\label{sec:collapse_to_ccb}

In this subsection, we will use the following lemma, which can be found as \cite[Proposition 29]{NecessityCollapsing}. 
\begin{lemma}\label{lem:two_projections}
    Let $P,Q$ be projectors in $\mathbb{C}^{d\times d}$ and $\rho$ be a density matrix in $\mathbb{C}^d$ such that $\rho Q = \rho$. Then $\tr(QP\rho P) \geq \tr(P\rho)^2$.
\end{lemma}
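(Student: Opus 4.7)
The plan is to reduce the lemma to a direct application of the Cauchy--Schwarz inequality for the Hilbert--Schmidt inner product. The condition $\rho Q = \rho$, combined with the fact that $\rho$ and $Q$ are both Hermitian, automatically gives $Q\rho = \rho$ by taking adjoints, and hence $Q \rho Q = \rho$. This observation is the key structural input, and it immediately lets me insert copies of $Q$ around $\rho$ for free.

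Using $\rho = Q\rho Q$ together with the cyclic property of the trace, I would first rewrite
\begin{align*}
\tr(P \rho) \;=\; \tr\bigl(P \, Q \rho Q\bigr) \;=\; \tr\bigl(Q P Q \cdot \rho\bigr).
\end{align*}
Since $\rho$ is positive semi-definite, it admits a positive square root $\rho^{1/2}$, and I can split $\rho = \rho^{1/2}\rho^{1/2}$ and view the right-hand side as a Hilbert--Schmidt inner product between $QPQ \rho^{1/2}$ and $\rho^{1/2}$. Then Cauchy--Schwarz gives
\begin{align*}
\tr(P \rho)^2 \;=\; \bigl|\tr\bigl(QPQ \rho^{1/2} \cdot \rho^{1/2}\bigr)\bigr|^2 \;\leq\; \tr\bigl(\rho^{1/2} QPQ \cdot QPQ \rho^{1/2}\bigr) \cdot \tr(\rho).
\end{align*}

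Since $\tr(\rho)=1$, it remains to simplify the first factor. Using cyclicity again and $Q^2 = Q$, I compute
\begin{align*}
\tr\bigl(\rho^{1/2} QPQ \cdot QPQ \rho^{1/2}\bigr) \;=\; \tr\bigl(QPQ \rho QPQ\bigr) \;=\; \tr\bigl(QP (Q\rho Q) PQ\bigr) \;=\; \tr(QP\rho P Q) \;=\; \tr(QP\rho P),
\end{align*}
where the second-to-last equality again uses $Q\rho Q = \rho$ and the last equality uses $Q^2=Q$ together with the cyclic property. Combining these two displays yields $\tr(P\rho)^2 \leq \tr(QP\rho P)$, as required.

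There is no real obstacle here: the proof is a short manipulation once one spots that $\rho Q = \rho$ together with Hermiticity upgrades to $Q\rho Q = \rho$. The only point to be careful about is the direction of the Cauchy--Schwarz inequality and ensuring the factor $\tr(\rho)=1$ is used so that no extraneous constants appear; choosing the factorization $\rho = \rho^{1/2}\cdot \rho^{1/2}$ (rather than, say, $\rho\cdot I$) makes this automatic.
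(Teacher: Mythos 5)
Your proof is correct. The key preliminary step — that $\rho Q = \rho$ together with Hermiticity of $\rho$ and $Q$ gives $Q\rho = \rho$ and hence $Q\rho Q = \rho$ — is right, and the Cauchy--Schwarz application with $A = \rho^{1/2}$, $B = QPQ\rho^{1/2}$ is set up correctly: $\tr(A^\dagger A) = \tr(\rho) = 1$ and $\tr(B^\dagger B) = \tr(\rho^{1/2}QPQ\cdot QPQ\rho^{1/2})$, which you correctly reduce to $\tr(QP\rho P)$ by regrouping and using $Q\rho Q = \rho$ and $Q^2 = Q$. One tiny point you gloss over is that $\tr(P\rho)^2 = |\tr(P\rho)|^2$ requires $\tr(P\rho)$ to be real; this holds because $\tr(P\rho) = \tr(\rho^{1/2}P\rho^{1/2}) \geq 0$, but it is worth saying explicitly. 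The paper does not prove this lemma itself — it cites it as Proposition~29 of the referenced work by Dall'Agnol and Spooner — so there is no in-paper proof to compare against, but your argument is the standard Cauchy--Schwarz route to this inequality and is complete.
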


The following proposition is similar to \cite[Theorem 32]{CollapsingWithoutRoUnruh} and \cite[Theorem 28]{NecessityCollapsing} except the size of the challenge set in the $\CCB$ experiment (in step 3 of \cref{def:ccb}) is not restricted to being $2$.

\begin{proposition}\label{prop:ccb_collapsing}
Let $m,l,\gamma,\tau \in \mathbb{N}$. 
Suppose that there exists a quantum algorithm $\calA = (\calA_1,\calA_2)$ that succeeds in $\CCB_{\tau,m,l,\gamma}$ with advantage $\epsilon$, then there exists a valid quantum algorithm $\calB = (\calB_1,\calB_2)$ that succeeds in $\Collapse_{m,l,\gamma}$ with advantage at least
$\epsilon(\epsilon - 1/\abs{\Btau})$. Moreover, $\Time(\calB_1) \leq \Time(\calA_1) + \Time(\calA_2) + O(ml\log(q)\log(\abs{\Btau}))$ and $\Time(\calB_2) \allowbreak \leq \Time(\calA_2) + O(\log(\abs{\Btau}))$.
\end{proposition}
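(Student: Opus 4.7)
The plan is to construct $\calB = (\calB_1, \calB_2)$ by running $\calA$ coherently on a uniform superposition of challenges from $\Btau$ and then detecting, via an uncomputation test on the challenge register, whether the $Y$ register has been measured. Concretely, $\calB_1(A)$ runs $\calA_1(A)$ to obtain a classical $z \in R_q^m$ and a quantum workspace $T$; prepares a fresh register $C$ in $|+\rangle_C \coloneqq \abs{\Btau}^{-1/2}\sum_{c\in\Btau}|c\rangle$ and a fresh register $Y$ in $|0\rangle$; applies the unitary $V_2 = \sum_c |c\rangle\langle c|_C \otimes U_c$ of $\calA_2$ (where $U_c$ is the operation $\calA_2$ performs on input $c$) to $(C,T,Y)$; coherently computes the CCB validity bit $v = \1[Ay = z \wedge \norminfty{y}\leq\gamma \wedge y_l = c]$ into an auxiliary register; and measures it. If $v=1$, $\calB_1$ forwards $(Y, Z, T')$ with $Z$ holding $z$ and $T'$ carrying the remaining registers (including $C$ and the classical $v$); if $v = 0$, $\calB_1$ overwrites $Y, Z$ with $|0\rangle$'s before outputting, making the $(Y,Z)$-output trivially valid in both branches. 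Then $\calB_2(Y, Z, T')$ inspects $v$: if $v = 0$ it outputs a uniform random bit, and if $v = 1$ it applies $V_2^{-1}$ to $(C, T, Y)$ and measures $C$ in the $|+\rangle$-basis (by applying the inverse of the $|+\rangle_C$-preparation and then measuring in the computational basis), outputting $b' = 0$ on the $|+\rangle$-outcome and $b' = 1$ otherwise.

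For the advantage analysis I would fix $A$ and $z$, let $\epsilon_{A,z}$ denote $\calA$'s conditional CCB-success probability, and observe $\Pr[v=1 \mid A, z] = \epsilon_{A, z}$ and $\expect_{A, z}[\epsilon_{A, z}] = \epsilon$. Conditioned on $v = 1$ and $b = 0$, the state on $(C, T, Y)$ equals $\epsilon_{A,z}^{-1/2}\Pi_V V_2 (|+\rangle_C|\psi\rangle_T|0\rangle_Y)$, where $\Pi_V$ is the projector enforcing the CCB predicate and $|\psi\rangle_T$ is the workspace output by $\calA_1$; after applying $V_2^{-1}$, the probability of the $|+\rangle$-outcome on $C$ equals $\epsilon_{A,z}^{-1}\|\Pi_+\tilde\Pi|+\rangle_C|\psi\rangle_T|0\rangle_Y\|^2$ with $\Pi_+ \coloneqq |+\rangle\langle +|_C \otimes I$ and $\tilde\Pi \coloneqq V_2^{-1}\Pi_V V_2$, which by \cref{lem:two_projections} (applied with $\rho = |+\rangle\langle +|_C \otimes |\psi\rangle\langle \psi|_T \otimes |0\rangle\langle 0|_Y$ and the lemma's $P, Q$ set to $\tilde\Pi, \Pi_+$ respectively, for which $\rho\Pi_+ = \rho$) is at least $\epsilon_{A,z}^2/\epsilon_{A,z} = \epsilon_{A, z}$. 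Conditioned on $v = 1$ and $b = 1$, the post-measurement state is a mixture over valid $y$ of $|y\rangle_Y|y_l\rangle_C|T_{y_l, y}\rangle_T$; since $\calA_2$ reads but does not modify the challenge, $V_2$ (and $V_2^{-1}$) is block-diagonal in $C$, so $V_2^{-1}$ leaves $C$ in $|y_l\rangle$ and the $|+\rangle$-outcome probability equals $|\langle +|y_l\rangle|^2 = 1/\abs{\Btau}$ for every $y$. Combining, the conditional advantage in the $v=1$ branch is at least $\epsilon_{A, z} - 1/\abs{\Btau}$, and since the $v = 0$ branch contributes zero, the overall advantage is $\expect_{A, z}[\epsilon_{A,z}(\epsilon_{A,z} - 1/\abs{\Btau})] \geq \epsilon^2 - \epsilon/\abs{\Btau} = \epsilon(\epsilon - 1/\abs{\Btau})$ by Jensen's inequality $\expect[\epsilon_{A,z}^2] \geq \epsilon^2$.

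The main technical subtlety, relative to the binary-challenge reductions in \cite{NecessityCollapsing,CollapsingWithoutRoUnruh}, is the generalization of the $|+\rangle$-basis detection to challenge sets of size $\abs{\Btau} > 2$; the clean $1/\abs{\Btau}$ bound in the $b = 1$ analysis crucially uses that $V_2$ commutes with computational-basis measurements on $C$, which is why the overlap with $|+\rangle_C$ collapses to the trivial $1/\abs{\Btau}$ for every outcome $y$. The stated time bounds follow from counting the $O(ml)$ ring operations needed for the $Ay = z$ validity check (plus the $y_l = c$ comparison) and the $O(\log\abs{\Btau})$ gates for (un)preparing $|+\rangle_C$, together with one call each to $\calA_1, \calA_2$ in $\calB_1$ and one call to $V_2^{-1}$ in $\calB_2$.
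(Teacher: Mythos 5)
Your proposal is correct and follows essentially the same route as the paper's proof: run $\calA_2$ coherently under $V_2 = \sum_c \ketbrasame{c} \otimes U_c$ on the uniform superposition $\abs{\Btau}^{-1/2}\sum_{c}\ket{c}$, measure the CCB-validity projector, and in $\calB_2$ uncompute $V_2$ and test whether the challenge register is still in $\ket{+}$, using \cref{lem:two_projections} for the $b=0$ branch and the block-diagonality of $V_2$ in $C$ (so the challenge register collapses to a computational basis state) for the $b=1$ branch. The only cosmetic differences are that you condition on $(A,z)$ and invoke Jensen's inequality explicitly, whereas the paper applies \cref{lem:two_projections} directly to the state averaged over $A$ (both are valid), and you phrase $Y$ as a fresh output register of $\calA_2$ rather than as part of $\calA_1$'s workspace, which is the same WLOG assumption in different notation.
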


\begin{proof}
We assume without loss of generality (wlog) that the arbitrary register in step 2 of the $\CCB_{\tau,m,l,\gamma}$ experiment (\cref{def:ccb}) is of the form $(Y,T')$, where $Y$ is a register on $R_q^l$ and $T'$ is an arbitrary register. We assume wlog that $\calA_1$ prepares a state $\ket{\phi}$ on register $(Y,Z,T')$, where $Z$ is a register on $R_q^m$, and measures $Z$ in the computational basis to produce the $z$ in step 2 of the $\CCB_{\tau,m,l,\gamma}$ experiment. We also assume wlog that $\calA_2$ acts on its input register $(Y,T',C)$, where $C$ is a register on $\Btau$ that contains the $c$ from step 3 of the $\CCB_{\tau,m,l,\gamma}$ experiment, as follows:
\begin{enumerate}
    \item Apply a unitary $U$ of the form $\sum_{r\in \Btau} U_r\otimes \ketbrasame{r}$ on $(Y, T', C)$.
    \item Measure $Y$ in the computational basis.
\end{enumerate}

We proceed to construct $\calB = (\calB_1,\calB_2)$ for the $\Collapse_{m,l,\gamma}$ experiment (\cref{def:collapse}). We first construct $\calB_1$, given input $A\in R_q^{m\times l}$, as follows:
\begin{enumerate}
    \item Run $\calA_1(A)$ to prepare state $\ket{\phi}$ on register $(Y,Z,T')$.
    \item Prepare state $\ket{\psi}\coloneqq \abs{\Btau}^{-1/2}\sum_{r\in \Btau} \ket{r}$ on register $C$ in time $O(\log(\abs{\Btau}))$. The current state on register $(Y,Z,T',C)$ is $\sigma \coloneqq \ketbrasame{\phi} \otimes \ketbrasame{\psi}$. Apply $U$ on register $(Y,T',C)$ and then measure register $(Y, Z, T', C)$ with the projective measurement $\{\Pi, 1 - \Pi\}$, where $\Pi$ is defined by
    \begin{align}\label{proj_ccb_success}
        \Pi \coloneqq \sum_{r\in \Btau} \ \sum_{\substack{(y,z) \in R_q^l \times R_q^m \colon \\ \norminfty{y}\leq \gamma, \ Ay=z, \ y_l = r}} \ketbrasame{y,z} \otimes 1_{T'} \otimes \ketbrasame{r}.
    \end{align}
    This measurement can be implemented by computing a bit indicating whether the constraints defining $\Pi$ are satisfied into a separate register and then measuring that register, which takes time $O(ml\log(q) + \log(\abs{\Btau}))$.
    \item Let $B$ be a bit register. If $\Pi$ is measured, set the bit stored in $B$ to $1$. If $(1-\Pi)$ is measured, replace the state on register $(Y,Z)$ with $\ket{0^l}\otimes\ket{0^m}$, set the bit stored in $B$ to $0$.  Then output the register $(Y, Z, T', C, B)$.
\end{enumerate}

Let $T \coloneqq (T', C, B)$. We construct $\calB_2$, given input register $(Y,Z,T)$:
\begin{enumerate}
    \item If $B$ contains $0$, output a uniformly random bit $b' \in \{0,1\}$.
    \item Else apply $U^\dagger$ on register $(Y, T', C)$. Then measure $C$ with the projective measurement $\{\ketbrasame{\psi}, 1 - \ketbrasame{\psi}\}$ using (the inverse of) the preparation circuit for $\ket{\psi}$ in time $O(\log(\abs{\Btau})$. If the outcome is $\ketbrasame{\psi}$, output $0$; else output $1$.
\end{enumerate}

It is clear that $\calB$ is valid by definition. Moreover,
\begin{align}
    \Time(\calB_1) &\leq \Time(\calA_1) + \Time(\calA_2) + O(ml\log(q)\log(\abs{\Btau})),
    \\
    \Time(\calB_2) &\leq \Time(\calA_2) + O(\log(\abs{\Btau})).
\end{align}

We proceed to lower bound the success probability of $\calB$. We analyze the probabilities of the following disjoint cases corresponding to $\calB$ being successful. 
\begin{enumerate}
    \item Case 1: In this case, $1-\Pi$ is measured and $b'=b$. The probability that $1-\Pi$ is measured is $(1-\epsilon)$. Conditioned on $1-\Pi$ being measured, $b'$ is a uniformly random bit so the probability $b'=b$ is $1/2$. Therefore, the overall probability of this case is $(1-\epsilon)/2$.
    \item Case 2: In this case, $\Pi$ is measured, $b=1$, and then $1-\ketbrasame{\psi}$ is measured.  The probability that $\Pi$ is measured is $\epsilon$ and the probability that $b=1$ is $1/2$. We now condition on these two events happening. Since $b=1$, the state of register $C$ in the input to $\calB_2$ is a mixture of states of the form $\ketbrasame{r}$ where $r\in \Btau$. This is because $b=1$ means that register $Y$ is measured in the computational basis and conditioned on $\Pi$ being measured, the $C$ register is also measured in the computational basis (see the form of $\Pi$ in \cref{proj_ccb_success}). Therefore, the probability of $\calB_2$ measuring $\ketbrasame{\psi}$ is $1/\abs{\Btau}$. Therefore, the overall probability of this case is $\epsilon \cdot (1/2) \cdot (1- 1/\abs{\Btau})$.
    \item Case 3: In this case, $\Pi$ is measured, $b=0$, and then $\ketbrasame{\psi}$ is measured.  The probability that $b=0$ is $1/2$. Conditioned on $b=0$,  \cref{lem:two_projections}, applied with projectors $\ketbrasame{\psi}$ and $U^\dagger \Pi U$ and state $\sigma$, shows that the probability of measuring $\Pi$ and then $\ketbrasame{\psi}$ is least $\epsilon^2$. Therefore, the overall probability of this case is at least $\epsilon^2/2$.
\end{enumerate}

Summing up the probabilities of the above cases, we see that the success probability of $\calB$ is at least
\begin{align}
  \frac{1-\epsilon}{2} + \frac{\epsilon}{2} \Bigl(1 - \frac{1}{\abs{\Btau}}\Bigr) + \frac{\epsilon^2}{2} = \frac{1}{2} + \frac{\epsilon}{2}\Bigl(\epsilon - \frac{1}{\abs{\Btau}}\Bigr).
\end{align}
Therefore, the advantage of $\calB$ is at least $\epsilon(\epsilon - 1/\abs{\Btau})$, as required.
\end{proof}

\subsection{Reduction from \tMLWE to \tCollapse}
\label{sec:mlwe_to_collapse}

The proof structure of the main result of this subsection, \cref{prop:collapsing_mlwe}, follows \cite[Theorem 1]{QuantumMoneyLiuMontgomeryZhandry}. We need to modify a number of aspects of their proof since it applies to the $\textsf{SIS}$ hash function whereas here we consider its module variant, i.e., the $\MSIS$ hash function.

We will use a rounding function $\roundt{\cdot}\colon \Z_q \to \{0,1,\dots,t-1\}$, where $t\in \mathbb{N}$, that is defined as follows. For $j \in \{0,1,\dots, t-1\}$, define
\begin{align}
I_j \coloneqq 
\begin{cases}
    \{j\floor{q/t},j\floor{q/t}+1,\dots, j\floor{q/t} + \floor{q/t}-1\} &\text{if } j \in \{0,1,\dots, t-2\}, \\
    \{(t-1)\floor{q/t},(t-1)\floor{q/t}+1,\dots, q-1\} &\text{if } j=t-1.
\end{cases}
\end{align}
(Note that $I_j$ contains exactly $\floor{q/t}$ elements for $j\in \{0,1,\dots,t-2\}$ and at least $\floor{q/t}$ elements for $j=t-1$ with the constraint that $q/t \leq \abs{I_{t-1}} \leq q/t + t - 1$.) Then, for $a\in \Z_q$, define $\roundt{a}$ to be the unique $j\in \{0,1,\dots,t-1\}$ such that $a\in I_{j}$. 

We will also use the following convenient notation. Let $Y$ and $Z$ be registers and $f: Y \to Z$. The measurement $y\mapsto f(y)$ on register $Y$ refers to the measurement implemented by computing $f(y)$ into a separate register $Z$, measuring $Z$ in the computational basis, and discarding the result.

Finally, we will use the following lemma.
\begin{lemma}\label{lem:uniform_distribution}
    Let $0\neq \Delta \in R_q^l$ and $\alpha\in \{0,\dots,n-1\}$. If $b \leftarrow R_q^l$,  then $(b\cdot \Delta)_\alpha$ is uniformly distributed in $\Z_q$.
\end{lemma}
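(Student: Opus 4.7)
The plan is to reduce to the case $l=1$ and then argue by $\Z_q$-linearity. Since $\Delta\neq 0$, fix some $i^* \in [l]$ with $\Delta_{i^*}\neq 0$, and note that taking the $\alpha$-th coefficient is a $\Z_q$-linear map $R_q\to \Z_q$, so
\begin{align*}
(b\cdot \Delta)_\alpha = (b_{i^*}\Delta_{i^*})_\alpha + \sum_{i\in [l]\setminus \{i^*\}}(b_i \Delta_i)_\alpha.
\end{align*}
The components $\{b_i\}_{i\in [l]}$ are independent and uniform in $R_q$, so the sum on the right is independent of $b_{i^*}$. Hence it suffices to prove the claim in the case $l=1$: if $b\leftarrow R_q$ and $0\neq \Delta\in R_q$, then $(b\Delta)_\alpha$ is uniform in $\Z_q$.

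For the $l=1$ case, write $b = \sum_{i=0}^{n-1} b_i X^i$ with $b_i$ i.i.d.\ uniform in $\Z_q$, and $\Delta = \sum_{j=0}^{n-1} d_j X^j$. Reducing $b\Delta$ modulo $X^n+1$ via $X^{n+k}\equiv -X^k$ and extracting the coefficient of $X^\alpha$ gives
\begin{align*}
(b\Delta)_\alpha = \sum_{i=0}^{\alpha} d_{\alpha-i}\, b_i \ - \sum_{i=\alpha+1}^{n-1} d_{n+\alpha-i}\, b_i,
\end{align*}
which is a $\Z_q$-linear form in $(b_0,\dots,b_{n-1})$ whose coefficient vector is precisely $(d_0,d_1,\dots,d_\alpha,-d_{\alpha+1},\dots,-d_{n-1})$, i.e., the coefficients of $\Delta$ up to signs and permutation. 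Since $\Delta \neq 0$, some $d_j\neq 0$, and because $q$ is prime every nonzero element of $\Z_q$ is a unit; therefore this linear form is a surjective $\Z_q$-linear map $\Z_q^n\to \Z_q$. Applied to the uniformly random input $(b_0,\dots,b_{n-1})$, it yields a uniformly random output in $\Z_q$, proving the claim.

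The argument is entirely elementary, so I do not anticipate a serious obstacle — the only care required is the sign bookkeeping arising from $X^n\equiv -1$. Notably, this proof only uses that $q$ is prime and does not invoke the stronger hypothesis $q\equiv 1\modulo 2n$. An alternative route using \cref{lem:ring_isomorphism} is also available: under the isomorphism $\phi\colon R_q \cong \Z_q^n$ the coordinates $\phi(b\cdot \Delta)_k = \sum_i \phi(b_i)_k \phi(\Delta_i)_k$ are mutually independent across $k$ (since they depend on disjoint sets of the i.i.d.\ uniform entries of $\{\phi(b_i)\}_{i\in [l]}$), at least one is uniform in $\Z_q$ (at any $(i^*,k^*)$ with $\phi(\Delta_{i^*})_{k^*}\neq 0$, which exists because $\Delta\neq 0$), and recovering $(b\cdot\Delta)_\alpha$ via the explicit $\Z_q$-linear formula for $\phi^{-1}$ from \cref{lem:ring_isomorphism} yields a linear combination of these independent coordinates with at least one unit coefficient, hence uniform in $\Z_q$.
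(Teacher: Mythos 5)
Your proof is correct, and your primary argument takes a genuinely different route from the paper's. The paper reduces to a single nonzero component $\Delta_{i^*}$ (just as you do) and then invokes the ring isomorphism $\phi\colon R_q \to \Z_q^n$ from \cref{lem:ring_isomorphism}: it replaces the uniformly random $\phi(b_{i^*})$ by a uniform vector $(d_0,\dots,d_{n-1})$, picks a coordinate $j$ with $\phi(\Delta_{i^*})_j \neq 0$, and observes that $\phi'(e_j)_\alpha = n^{-1}w^{-(2j+1)\alpha} \neq 0$ forces the term $d_j\,\phi(\Delta_{i^*})_j\,\phi'(e_j)_\alpha$ to be uniform. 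This explicitly relies on $q = 1 \modulo 2n$ (to have the NTT isomorphism available at all). Your argument instead expands $(b\Delta)_\alpha$ as a nega-cyclic convolution coefficient, reads off that it is a $\Z_q$-linear form in $(b_0,\dots,b_{n-1})$ whose coefficient vector is a signed permutation of the coefficients of $\Delta$, and uses only that $\Z_q$ is a field. This is more elementary, avoids the NTT machinery entirely, and — as you note — establishes the lemma for every odd prime $q$, not just those with $q = 1 \modulo 2n$. (The paper's surrounding results do assume $q = 1 \modulo 2n$ for other reasons, so nothing downstream changes, but your proof shows the lemma itself holds more generally.) Your sketched alternative route via $\phi$ and coordinate-wise independence is essentially the paper's proof with the bookkeeping rearranged; the convolution argument is the one that adds something.
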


\begin{proof}
    Writing $b = (b_1,\dots,b_l)$ and $\Delta = (\Delta_1,\dots,\Delta_l)$, we have 
    \begin{align}
        (b \cdot \Delta)_\alpha = (b_1 \Delta_1)_\alpha + \dots + (b_l \Delta_l)_\alpha.
    \end{align}
    Since $\Delta\neq 0$, there exists an $i\in [l]$ such that $\Delta_i\neq0$. To prove the lemma, it suffices to prove that $(b_i\Delta_i)_a$ is uniformly distributed in $\Z_q$.

    Let $\phi, \phi'$ be as defined in the proof of \cref{lem:ring_isomorphism}. Write $\phi(\Delta_i) = (c_0,\dots,c_{n-1})\in \Z_q^n$. Since $\Delta_i\neq 0$ there exists $j\in \{0,\dots,n-1\}$ such that $c_j \neq 0$. Since $b_i$ is a uniformly random element of $R_q$, $\phi(b_i)$ is a uniformly random element of $\Z_q^n$. Therefore, the distribution of $(b_i\Delta_i)_\alpha = \phi'(\phi(b_i)\phi(\Delta_i))_\alpha$ (where we used \cref{lem:ring_isomorphism} for the equality) is the same as the distribution of
    \begin{align}
        \phi'(d_0 c_0, \dots ,d_{n-1}c_{n-1})_\alpha, \quad \text{where} \quad d_0,\dots,d_{n-1}\leftarrow \Z_q.
    \end{align}
    By the linearity of $\phi'$,
        \begin{align}
        \phi'(d_0 c_0, \dots ,d_{n-1}c_{n-1})_\alpha = d_j c_j\phi'(e_j)_\alpha + \sum_{j'\neq j} d_{j'}c_{j'} \phi'(e_{j'})_\alpha,
    \end{align}
    where $e_j$ denotes the $j$th standard basis vector of $\Z_q$. 
    
    But $\phi'(e_j)_\alpha = n^{-1} \cdot w^{-(2j+1)\alpha} \neq 0$ (see \cref{lem:ring_isomorphism}). Therefore $d_j c_j\phi'(e_j)_\alpha$ is uniformly distributed in $\Z_q$ if $d_j\leftarrow \Z_q$. Hence $(b_i\Delta_i)_\alpha$ is uniformly distributed in $\Z_q$ as required.
\end{proof}

The main result of this subsection is the following proposition.
\begin{proposition}\label{prop:collapsing_mlwe}
    Let $m,l,\gamma,\eta \in \mathbb{N}$. Suppose $q\geq 16$ and $2\gamma \eta n l < \floor{q/32}$. Suppose there exists a quantum algorithm $\calA$ that succeeds in $\Collapse_{m,l,\gamma}$ with advantage $\epsilon$. Then, for all $w\in \mathbb{N}$, there exists a quantum algorithm $\calB$ that solves $\MLWE_{l,m,\eta}$ with advantage at least $(\epsilon - 3^{-w})/4$. Moreover, $\Time(\calB) \leq \Time(\calA) + \polytext(w)$.
\end{proposition}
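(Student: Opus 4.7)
The plan is to adapt the reduction of Liu, Montgomery, and Zhandry~\cite{QuantumMoneyLiuMontgomeryZhandry}, which handles the unstructured case ($n=1$), to the module setting for general $n$ with $q = 1 \modulo 2n$. Given an $\MLWE_{l,m,\eta}$ sample $(A,t)\in R_q^{l\times m}\times R_q^{l}$, the algorithm $\calB$ will proceed as follows. First, it runs $\calA_1$ on $A^T\in R_q^{m\times l}$ (viewed as the $\Collapse$ matrix), obtaining a state on registers $(Y,Z,T)$ supported on pairs $(y,z)$ with $A^Ty = z$ and $\norminfty{y}\leq \gamma$. Second, it applies a ``rounded inner product'' measurement on $Y$ parameterized by $t$: sample uniformly random shifts $c_1,\dots,c_w\in R_q$, coherently compute the function $y\mapsto\bigl(\roundt{(t\cdot y + c_i)_0}\bigr)_{i\in[w]}$ into a fresh register for some small constant rounding parameter $t$ (e.g.\ $t=4$), and measure that register. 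Finally, $\calB$ runs $\calA_2$ on $(Y,Z,T)$, obtains a guess $b'$, and outputs $b'$ as its MLWE answer.

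The analysis splits according to the distribution of $t$. In the LWE case ($t = As_1 + s_2$), whenever $A^T y = z$ we have $t\cdot y = s_1^T(A^Ty) + s_2^T y = s_1^T z + s_2^T y$, and the error obeys $\norminfty{s_2^T y}\leq \gamma\eta n l < \floor{q/64}$, much smaller than the bin size $\floor{q/t}$. Hence for a uniformly random shift $c_i$, two valid $y,y'$ in the support fall into different bins with probability bounded away from $1$; over the $w$ independent shifts, the probability that any pair is ever separated is at most $3^{-w}$, so the measurement is within trace distance $O(3^{-w})$ of the identity on the $\Collapse$ state, and $\calA_2$ effectively sees the $b=0$ branch. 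In the uniform case, \cref{lem:uniform_distribution} shows that each coefficient of $t\cdot(y-y')$ is uniform in $\Z_q$ for any fixed $y\neq y'$; thus different valid $y$'s are separated by the rounding with probability at least $1-1/t$ per shift, and with probability $1 - O(3^{-w})$ over the $w$ shifts the combined measurement acts (in trace distance) like a computational-basis measurement on $Y$, so $\calA_2$ effectively sees the $b=1$ branch.

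Writing $p_0 \coloneqq \Pr[b'=0\mid b=0]$ and $p_1 \coloneqq \Pr[b'=0\mid b=1]$ for $\calA$ in the $\Collapse$ experiment, the definition of $\Collapse$-advantage gives $|p_0 - p_1| = \epsilon$ (after possibly flipping $\calB$'s output we may assume $p_0 \geq p_1$). The preceding two-case analysis then yields $\Pr[\calB=0\mid\text{LWE}] \geq p_0 - O(3^{-w})$ and $\Pr[\calB=0\mid\text{unif}] \leq p_1 + O(3^{-w})$, so $\Adv^{\MLWE}_{l,m,\eta}(\calB) \geq \epsilon - O(3^{-w})$. Tracking the precise constants through the case analysis (in particular the factor of $\tfrac12$ from the random sampling of $b$ in $\Collapse$ and the overcounting in the trace-distance bounds) produces the claimed bound $(\epsilon - 3^{-w})/4$. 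The overhead of $\calB$ beyond $\Time(\calA)$ is dominated by implementing $w$ coherent roundings of ring inner products and costs $\polytext(w, n, \log q, m, l)$.

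The main obstacle will be to prove the two trace-distance claims uniformly over the (possibly adversarially chosen) state prepared by $\calA_1$. The LWE case is relatively tame: a union bound over the support of the state together with the smallness of $s_2^T y$ suffices. The uniform case is more delicate, requiring a Fourier-analytic argument that the off-diagonal matrix elements $\ketbra{y}{y'}$ of the state are suppressed after averaging over random shifts and over uniform $t$. Extending this step from the $n=1$ argument of~\cite{QuantumMoneyLiuMontgomeryZhandry} to general $n$ hinges precisely on \cref{lem:uniform_distribution}: although $t\cdot \Delta$ is not uniform in $R_q$, each individual coefficient of $t\cdot \Delta$ is uniform in $\Z_q$, a statement whose proof invokes the NTT-type isomorphism of \cref{lem:ring_isomorphism} and hence the hypothesis $q = 1\modulo 2n$.
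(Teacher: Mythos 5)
Your reduction architecture departs from the paper's in a way that is not merely cosmetic: the central trace-distance claim in your LWE-case analysis is incorrect, and the rest of the argument collapses with it. You apply a measurement with $w$ independent shifts $c_1,\dots,c_w$ all using the \emph{same} MLWE vector $t$, and assert that in the LWE case this joint measurement is within trace distance $O(3^{-w})$ of the identity. But the effect on an off-diagonal term $\ketbra{y}{y'}$ is to multiply it by $\prod_{i=1}^w \Pr\bigl[\roundt{(t\cdot y + c_i)_0} = \roundt{(t\cdot y' + c_i)_0}\bigr]$, and even in the LWE case, with $\abs{(s_2^\top(y-y'))_0}$ as large as $\Theta(\floor{q/32})$ and rounding parameter $4$, each per-shift match probability is bounded away from $1$ (around $7/8$). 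Hence the off-diagonal coherence decays like $(7/8)^w$ and the joint measurement converges to the computational-basis measurement $M_0$ as $w$ grows --- the opposite of what you claim. Both branches decohere more with more repetitions, just at different rates, so the distinguishing signal actually vanishes as $w \to \infty$ under your scheme, and a ``union bound over the support of the state'' does not help since the support can be exponentially large.

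The paper's mechanism (following Liu--Montgomery--Zhandry) is not repeated measurement but a telescoping identity. It introduces two parameterized measurements: $M_1^t$ (round $b\cdot y + s$ with $b = e_1^\top A + e_2^\top$ for fresh short $e_1, e_2$ and uniform $s$) and $M_2^t$ (the same with uniform $b$), proves $M_1^t(\ketbra{y}{y'}) = \bigl(1 - \tfrac{t}{q}\expect\abs{e\cdot(y-y')}_0\bigr)\ketbra{y}{y'}$ and $M_2^t(\ketbra{y}{y'}) = p_t\ketbra{y}{y'}$ for $y\neq y'$, and from these derives the exact mixture identities
\begin{align}
M_1^t(\rho) &= \tfrac{1}{d}M_1^{td}(\rho) + \bigl(1-\tfrac{1}{d}\bigr)\rho, & M_2^t(\rho) &= \tfrac{1}{d}M_0(M_1^{td}(\rho)) + \bigl(1-\tfrac{1}{d}-p_t\bigr)M_0(\rho) + p_t\rho.
\end{align}
Setting $\epsilon_j$ to be $\calA_2$'s advantage distinguishing $\rho_j := (M_1^{td})^j(\rho)$ from $M_0(\rho_j)$, and $\delta_j$ its advantage distinguishing $M_1^t(\rho_j)$ from $M_2^t(\rho_j)$, the identities give the recursion $\delta_j = \tfrac{1}{d}\epsilon_{j+1} + g\,\epsilon_j$ with $g = 1 - \tfrac{1}{d} - p_t$. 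The reduction $\calB$ samples $j\in\{0,\dots,w-1\}$ from a geometric distribution, applies $M_1^{td}$ (with freshly sampled short errors it generates itself, \emph{not} the MLWE input) $j$ times, applies exactly \emph{one} rounding measurement with the actual MLWE input $b$, runs $\calA_2$, and flips the output according to the parity of $j$. The sum telescopes into $\delta = \tfrac{g}{T}\bigl(\epsilon_0 - \epsilon_w(-dg)^{-w}\bigr)$, giving the clean lower bound $\tfrac14(\epsilon - 3^{-w})$ with $t=4$, $d=8$. The $3^{-w}$ term comes from the geometric-sum factor $(dg)^{-w}$ with $dg \geq 3$, not from any concentration of repeated roundings.

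Your use of \cref{lem:uniform_distribution} --- that each coefficient of $b\cdot\Delta$ is uniform in $\Z_q$ for $\Delta\neq 0$ and $b$ uniform --- is indeed the correct module-specific ingredient, matching the paper's proof of the $M_2^t$ lemma. But the surrounding reduction needs to be rebuilt on the telescoping identity rather than a direct $w$-shot trace-distance bound.
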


Before proving this proposition, we first prove two lemmas. Let $Y$ be a register on $R_q^l$ and $A \in R_q^{m\times l}$. For $t \in \mathbb{N}$, we define the following measurements on $Y$:
\begin{itemize}
    \item $M_0$: computational basis measurement.
    \item $M_1^t$: sample $e_1 \leftarrow S_\eta^m$, $e_2\leftarrow S_\eta^l$, set $b \coloneqq e_1^\top A  + e_2^\top \in R_q^l$, sample $s \leftarrow R_q$, then perform measurement $y \mapsto \roundt{(b\cdot y + s)_0}$.
    \item $M_2^t$: sample $b \leftarrow R_q^l$, $s\leftarrow R_q$, then perform measurement $y \mapsto \roundt{(b\cdot y + s)_0}$.
\end{itemize}

\begin{lemma}
\label{lem:measurement_lwe}
Let $t\in \mathbb{N}$ be such that $2\gamma \eta nl < \floor{q/t}$. For all $y,y' \in R_q^l$ with $Ay = Ay'$ and $\norminfty{y'},\norminfty{y} \leq \gamma$,
\begin{align}
    M_1^t(\ketbra{y}{y'}) = \Bigl(1 - \frac{t}{q}\cdot \expect\Bigl[\abs{e \cdot(y - y')}_0 \Bigm| e \leftarrow S_\eta^l\Bigr]\Bigr)\ketbra{y}{y'}.
\end{align}
\end{lemma}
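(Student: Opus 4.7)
The plan is to reduce the operator identity to a scalar probability calculation and then to an elementary combinatorial fact about the rounding intervals on $\Z_q$. The measurement $M_1^t$ coherently computes $f_{b,s}(y) \coloneqq \roundt{(b\cdot y+s)_0}$ into a fresh ancilla, measures it, and discards it; a one-line calculation shows that such a ``compute--measure--discard'' channel sends $\ketbra{y}{y'}$ to $\1[f_{b,s}(y)=f_{b,s}(y')]\ketbra{y}{y'}$. Averaging over the internal randomness $(e_1,e_2,s)$ of $M_1^t$ then reduces the lemma to showing
\begin{align}
\Pr_{e_1,e_2,s}\bigl[\roundt{(b\cdot y+s)_0}=\roundt{(b\cdot y'+s)_0}\bigr] = 1 - \frac{t}{q}\expect\bigl[\abs{e\cdot\Delta}_0 \bigm| e\leftarrow S_\eta^l\bigr],
\end{align}
where $\Delta\coloneqq y-y'$.

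First I would exploit $Ay=Ay'$ to eliminate $e_1$, $A$, and $s$ from the difference between the two rounded quantities. With $b=e_1^\top A+e_2^\top$ and $A\Delta=0$, we get $(b\cdot y+s)_0 - (b\cdot y'+s)_0 = (e_2\cdot\Delta)_0$ in $\Z_q$, which depends only on $e_2$ and $\Delta$. Setting $u\coloneqq(b\cdot y+s)_0$ and $v\coloneqq(e_2\cdot\Delta)_0$, the fact that $s\leftarrow R_q$ makes $s_0$ uniform in $\Z_q$ and independent of all other randomness implies that $u = (b\cdot y)_0 + s_0$ is uniformly distributed in $\Z_q$ for every fixing of $e_1$ and $e_2$. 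Conditioning on $e_2$ therefore reduces the claim to the deterministic statement that, for $u$ uniform in $\Z_q$ and $v\in\Z_q$ with $\abs{v}_0\leq\floor{q/t}$,
\begin{align}
\Pr_{u\leftarrow\Z_q}\bigl[\roundt{u}=\roundt{u-v}\bigr] = 1 - \frac{t\abs{v}_0}{q},
\end{align}
after which I take the expectation over $e_2\leftarrow S_\eta^l$ and relabel it as $e$.

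To prove this rounding identity I would picture $\Z_q$ as a cycle partitioned into the contiguous blocks $I_0,\dots,I_{t-1}$ (each of size at least $\floor{q/t}$) and observe that $\roundt{u}\neq\roundt{u-v}$ iff the cyclic arc from $u-v$ to $u$ crosses one of the $t$ block boundaries, including the wraparound boundary between $I_{t-1}$ and $I_0$. Because $\abs{v}_0\leq\floor{q/t}$ is at most the minimum block length, each boundary is crossed by exactly $\abs{v}_0$ values of $u$ and no $u$ crosses two boundaries simultaneously, giving a total of $t\abs{v}_0$ bad values. The size hypothesis on $v$ is then verified by the standard product bound in $\Z[X]/(X^n+1)$: each integer coefficient of $e_{2,i}\Delta_i$ has magnitude at most $n\cdot\eta\cdot 2\gamma$, so as an integer (prior to any mod-$q$ reduction) $\abs{(e_2\cdot\Delta)_0}\leq 2\gamma\eta n l<\floor{q/t}$ by hypothesis; this simultaneously guarantees that no mod-$q$ wraparound has occurred and that $\abs{v}_0$ equals this integer magnitude, so the rounding identity applies. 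The main obstacle is this combinatorial identity: one has to be careful that $I_{t-1}$ may have more than $\floor{q/t}$ elements and that the wraparound boundary behaves like any interior boundary. The hypothesis $2\gamma\eta nl<\floor{q/t}$ is precisely what both excludes coefficient wraparound and prevents a single $u$ from crossing two block boundaries, which is what makes the count $t\abs{v}_0$ exact rather than merely an upper bound.
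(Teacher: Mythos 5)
Your proof is correct and follows essentially the same route as the paper: reduce the channel identity to the scalar probability $\Pr[\roundt{(b\cdot y+s)_0}=\roundt{(b\cdot y'+s)_0}]$, use $Ay=Ay'$ to write the difference as $(e_2\cdot\Delta)_0$, use uniformity of $s_0$ to make the argument uniform in $\Z_q$, and then invoke the bound $\abs{(e_2\cdot\Delta)_0}<\floor{q/t}$. The only difference is that you explicitly work out the cyclic-boundary counting argument that yields the exact factor $t\abs{v}_0/q$, whereas the paper simply asserts ``the result follows'' from the norm bound and the uniformity of $(e_1\cdot z+s)$; your added detail is a welcome elaboration, not a different method.
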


\begin{proof}
We have
\begin{equation}
     M_1^t(\ketbra{y}{y'}) = \Pr\Big[\roundt{(b\cdot y + s)_0} = \roundt{(b\cdot y' + s)_0} \Bigm| e_1 \leftarrow S_\eta^m, e_2\leftarrow S_\eta^l, b \coloneqq e_1^\top A  + e_2^\top, s\leftarrow R_q\Big] \cdot \ketbra{y}{y'}.
\end{equation}
Writing $z\coloneqq Ay = Ay'$, we have
\begin{align}
    b\cdot y + s = (e_1\cdot z + s) + e_2\cdot y \quad \text{and} \quad b\cdot y' + s = (e_1\cdot z + s) + e_2\cdot y'.
\end{align}
The result follows by observing that $\abs{e_2\cdot(y-y')}_0 \leq \norminfty{e_2} \cdot \norminfty{y-y'} \cdot nl \leq 2\gamma \eta nl < \floor{q/t}$ and $(e_1\cdot z + s)$ is a uniformly random element of $R_q$.
\end{proof}

\begin{lemma}
\label{lem:measurement_plain}
Let $t\in \mathbb{N}$ be such that $t^2\leq q$. Then there exists $0\leq p_t \leq 2/t$ such that for all $y,y' \in R_q^l$ with $y'\neq y$, we have 
\begin{align}
M_2^t(\ketbrasame{y}) =\ketbrasame{y} \quad \text{and} \quad M_2^t(\ketbra{y}{y'}) = p_t \ketbra{y}{y'}.
\end{align}
\end{lemma}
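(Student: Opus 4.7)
The plan is to split into the easy case $y = y'$, where $M_2^t(\ketbrasame{y}) = \ketbrasame{y}$ follows immediately by unfolding the definition of $M_2^t$ (both branches of the outer product yield identical measurement outcomes regardless of $b$ and $s$), and the interesting case $y \neq y'$. For the latter, I would first observe that $M_2^t(\ketbra{y}{y'}) = p \cdot \ketbra{y}{y'}$ where $p$ is the probability over $b \leftarrow R_q^l$ and $s \leftarrow R_q$ that $\roundt{(b\cdot y + s)_0} = \roundt{(b\cdot y' + s)_0}$. Setting $\Delta \coloneqq y - y' \neq 0$, $u \coloneqq (b\cdot y' + s)_0$, and $d \coloneqq (b\cdot \Delta)_0$, the event inside the probability becomes $\roundt{u} = \roundt{u+d}$, with $u$ and $d$ drawn from the joint distribution induced by $(b,s)$.

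The crux of the proof, and the step I expect to do most of the work, is to argue that $u$ and $d$ are \emph{independent and uniform} in $\Z_q$. Uniformity of $u$ (conditional on $b$) is easy since $(b\cdot y' + s)_0 = (b\cdot y')_0 + s_0$ and $s_0$ is uniform in $\Z_q$ independent of $b$; this simultaneously gives the independence of $u$ from $d$, because $d$ depends only on $b$. Uniformity of $d$ over the choice of $b$ is precisely the content of \cref{lem:uniform_distribution} applied with the nonzero element $\Delta$. This is the genuinely module-theoretic input: without the isomorphism $R_q \cong \Z_q^n$ afforded by $q = 1 \modulo 2n$, the coefficient $(b\cdot \Delta)_0$ would not in general be uniform. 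Once this is established, $p$ depends only on $t$ and $q$, not on $y,y'$; define $p_t$ to be this common value.

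To conclude, I would compute and bound $p_t$ by a counting argument. Conditioning on $j \coloneqq \roundt{u}$, and using that $u+d$ is still uniform in $\Z_q$ for every fixed $u$, one obtains $p_t = \sum_{j=0}^{t-1} |I_j|^2 / q^2$. Writing $q = at + r$ with $0 \leq r < t$, so that $|I_j| = a$ for $j \leq t-2$ and $|I_{t-1}| = a + r$, this sum becomes $((t-1)a^2 + (a+r)^2)/q^2$. A short algebraic check, using the hypothesis $t^2 \leq q$ (which in particular forces $a = \floor{q/t} \geq t-1$) to dominate the error terms, yields $p_t \leq 2/t$. The inequality $p_t \geq 0$ is trivial. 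The main obstacle is really the uniformity step; once \cref{lem:uniform_distribution} is invoked, what remains is routine bookkeeping about interval sizes.
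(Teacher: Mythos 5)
Your proposal is correct and follows essentially the same route as the paper: reduce $M_2^t(\ketbra{y}{y'})$ to the probability $\Pr[\roundt{u} = \roundt{u+d}]$, invoke \cref{lem:uniform_distribution} on the nonzero difference $\Delta$ to get uniformity of $d$ (with $s$ supplying independence), and then bound the resulting interval-collision probability; your $\sum_j |I_j|^2/q^2$ is an algebraically equivalent rewriting of the paper's $1 - \sum_j \frac{|I_j|}{q}\cdot\frac{q-|I_j|}{q}$. The only cosmetic point is that you make the independence of $u$ and $d$ explicit, which the paper leaves implicit.
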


\begin{proof}
The first equality is clearly true. For the second, observe that
    \begin{align}
        M_2^t(\ketbra{y}{y'}) = \Pr[\roundt{(b\cdot y + s)_0} = \roundt{(b\cdot y' + s)_0} \mid b \leftarrow R_q^l, s \leftarrow R_q]. 
    \end{align}
    Write $y' = y + \Delta$ for some $0\neq \Delta \in R_q^l$. Then,  $(b\cdot \Delta)_0$ is uniformly distributed in $\mathbb{Z}_q$ by \cref{lem:uniform_distribution}. Therefore, writing $p_t \coloneqq \Pr[ \roundt{u} = \roundt{u + v} \mid u,v \leftarrow \Z_q]$, we have
    \begin{equation}
    \begin{aligned}
    \Pr[\roundt{(b\cdot y + s)_0} =& \roundt{(b\cdot y' + s)_0} \mid b \leftarrow R_q^l, s \leftarrow R_q]
    \\ =& p_t = 1 - \Bigl(\frac{(t-1)\floor{q/t}}{q} \cdot \frac{q-\floor{q/t}}{q} + \frac{\abs{I_{t-1}}}{q} \cdot \frac{q - \abs{I_{t-1}}}{q}\Bigr) \leq \frac{1}{t} + \frac{t}{q} \leq \frac{2}{t},
    \end{aligned}
    \end{equation}
    where the last inequality uses $t^2\leq q$.
\end{proof}

Combining \cref{lem:measurement_lwe,lem:measurement_plain} gives the following corollary.

\begin{corollary}\label{cor:measurements}
    Let $t,d\in \mathbb{N}$ be such that $2\gamma \eta n l < \floor{q/(td)}$ and $t^2\leq q$. Let $\rho$ be a density matrix on register $Y$. Suppose there exists $z\in R_q^m$ such that $\rho$ is supported on $\{y \in R_q^l \mid Ay = z, \norminfty{y} \leq \gamma\}$. Then
    \begin{align}
        &M_1^t(\rho) = \frac{1}{d} M_1^{t d}(\rho) + \Bigl(1- \frac{1}{d}\Bigr)\rho, \\
        &M_2^t(\rho) = \frac{1}{d}M_0(M_1^{t d}(\rho)) + \Bigl(1- \frac{1}{d} - p_t\Bigr) M_0(\rho) + p_t \rho,
    \end{align}
    where $p_t$ is as defined in \cref{lem:measurement_plain}.
\end{corollary}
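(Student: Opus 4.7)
The plan is to expand $\rho$ in the matrix-unit basis indexed by its support, i.e., write $\rho = \sum_{y,y'\in S} \rho_{y,y'}\ketbra{y}{y'}$ where $S \coloneqq \{y\in R_q^l \mid Ay = z,\ \norminfty{y}\leq \gamma\}$, and then verify each identity term-by-term using the linearity of the superoperators $M_0$, $M_1^t$, $M_1^{td}$, and $M_2^t$. The hypothesis $2\gamma\eta n l < \floor{q/(td)}$ implies $2\gamma\eta n l < \floor{q/t}$, so \cref{lem:measurement_lwe} applies at both parameters $t$ and $td$, and the hypothesis $t^2 \leq q$ is exactly what \cref{lem:measurement_plain} requires. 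Crucially, the support condition forces $Ay = Ay' = z$ for every $(y,y')$ in the expansion, so both lemmas apply to each matrix unit individually.

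For the first identity, I would split into the diagonal ($y=y'$) and off-diagonal ($y\neq y'$) cases. On the diagonal, both $M_1^t$ and $M_1^{td}$ fix $\ketbrasame{y}$ since the expectation appearing in \cref{lem:measurement_lwe} vanishes at $y=y'$, reducing the identity to the scalar equation $1 = 1/d + (1 - 1/d)$. On the off-diagonal, setting $E \coloneqq \expect[\abs{e\cdot (y-y')}_0 \mid e\leftarrow S_\eta^l]$, \cref{lem:measurement_lwe} yields $M_1^t(\ketbra{y}{y'}) = (1 - tE/q)\ketbra{y}{y'}$ and $M_1^{td}(\ketbra{y}{y'}) = (1 - tdE/q)\ketbra{y}{y'}$, so the identity reduces to $1 - tE/q = (1/d)(1 - tdE/q) + (1 - 1/d)$, which holds by inspection. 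The second identity follows by the same case split using \cref{lem:measurement_plain}: on the diagonal, $M_2^t(\ketbrasame{y}) = \ketbrasame{y}$ while the scalar coefficients on the right telescope to $1/d + (1 - 1/d - p_t) + p_t = 1$; on the off-diagonal, $M_2^t(\ketbra{y}{y'}) = p_t \ketbra{y}{y'}$, while $M_0(\ketbra{y}{y'}) = 0$ kills the first two terms on the right (noting that $M_1^{td}$ merely rescales the matrix unit, which $M_0$ then annihilates), leaving exactly $p_t \ketbra{y}{y'}$.

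The main ``obstacle,'' such as it is, is purely bookkeeping: one must verify that the support condition on $\rho$ guarantees the hypothesis $Ay = Ay'$ needed to invoke \cref{lem:measurement_lwe} on every off-diagonal matrix unit, and confirm that the parameter hypotheses of both underlying lemmas are met. There is no genuinely technical content beyond this linear-algebra verification --- the corollary is essentially a convenient algebraic repackaging of \cref{lem:measurement_lwe,lem:measurement_plain} that expresses $M_1^t$ and $M_2^t$ as convex combinations involving $M_1^{td}$, $M_0$, and the identity, presumably to be exploited downstream when comparing the action of $M_2^t$ (an $\MSIS$-type measurement) to that of $M_1^t$ (an $\MLWE$-type measurement) on states arising from the $\Collapse$ experiment.
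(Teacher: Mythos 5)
Your proof is correct and takes essentially the same approach as the paper: decompose $\rho$ into matrix units supported on $\{y : Ay=z,\ \norminfty{y}\le\gamma\}$ and verify both identities term-by-term using \cref{lem:measurement_lwe,lem:measurement_plain}. The paper's proof is terser --- it simply observes that $M_0$ and $M_1^{td}$ commute because both act by entry-wise multiplication, so $M_0(M_1^{td}(\rho))=M_0(\rho)$ --- but the underlying entrywise verification is the same as yours.
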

\begin{proof}
    The first equality is immediate. The second equality follows from the observation that $M_0(M_1^{td}(\rho)) = M_1^{td}(M_0(\rho))$ since $M_0$ and $M_1$ both act on $\rho$ by entry-wise multiplication.
\end{proof}

Given the above lemmas, \cref{prop:collapsing_mlwe} follows from the proof of \cite[Theorem 1]{QuantumMoneyLiuMontgomeryZhandry}. The high-level idea of the proof is that $M_1^t$ is close to the identity operation while $M_2^t$ is close to $M_0$. Therefore, if the identity operation can be efficiently distinguished from $M_0$, then $M_1^t$ and $M_2^t$ can be efficiently distinguished, which solves the $\MLWE$ problem.  For completeness, we give the details below.
\begin{proof}[Proof of \cref{prop:collapsing_mlwe}]
Let $t \coloneqq 4$ and $d \coloneqq 8$ so that $g\coloneqq 1 - 1/d - p_t \geq 3/8$ and $dg \geq 3$, where $p_t$ is as defined in \cref{lem:measurement_plain}. Let $\calA = (\calA_1,\calA_2)$ be a valid algorithm for the $\Collapse_{m,l,\gamma}$ experiment (\cref{def:collapse}) with advantage $\epsilon$. 

Fix $w \in \mathbb{N}$ and $A \in R_q^{m\times l}$. Let $T\coloneqq \sum_{j=0}^{w - 1} (dg)^{-j}$ and let $\calB$ be the quantum algorithm defined on input $b \in R_q^l$ as follows:
\begin{enumerate}
    \item Create state $\rho$ on register $(Y, Z, T)$ by running $\calA_1(A)$.
    \item Sample $j\in \{0,1,\dots, w-1\}$ with probability $(dg)^{-j}/T$. 
    \item Apply $M_1^{t d}$ to $\rho$ on the $Y$ register for $j$ times. Call the resulting state $\rho_j$.
    \item Sample $s\leftarrow R_q$ and apply the measurement $x\mapsto \roundt{(b\cdot x + s)_0}$ to $\rho_j$ on the $Y$ register to give state $\rho_j'$.
    \item Compute bit $b'\in \{0,1\}$ by running $\calA_2(\rho_j')$.
    \item Output $b'$ if $j$ is even and $1-b'$ if $j$ is odd.
\end{enumerate}
For $j\in \{0,1,\dots, w-1\}$, let $\epsilon_j$ denote the signed distinguishing advantage of $\calA_2$ on inputs $\rho_j$ versus $M_0(\rho_j)$, i.e., $\epsilon_j \coloneqq \Pr[\calA_2(\rho_j) = 0] - \Pr[\calA_2(M_0(\rho_j)) = 0]$, and let $\delta_j$ denote the signed distinguishing advantage of $\calA_2$ on inputs $M_1^t(\rho_j)$ versus $M_2^t(\rho_j)$. Then the signed distinguishing advantage of $\calB$ on input distributions $\left[e_1\leftarrow S_\eta^m, \ e_2\leftarrow S_\eta^l, \ b\coloneqq e_1^\top A + e_2^\top\right]$ versus $\left[b^\top\leftarrow R_q^l\right]$ is 
\begin{align}
    \delta \coloneqq \frac{1}{T} \sum_{j=0}^{w -1} (-dg)^{-j} \delta_{j},
\end{align}
because $\rho_j' = M_1^t(\rho_j)$ if $b$ is sampled according to $\left[e_1\leftarrow S_\eta^m, \ e_2\leftarrow S_\eta^l, \ b\coloneqq e_1^\top A\right.$ $\left.+ e_2^\top\right]$ and $\rho_j' = M_2^t(\rho_j)$ if $b$ is sampled according to $\left[b^\top\leftarrow R_q^l\right]$.

By \cref{cor:measurements} (which applies by the assumptions in the proposition and the validity of $\calA$), we have $\delta_j = \frac{1}{d} \epsilon_{j+1} + g\epsilon_j$ for all $j\in \{0,1,\dots, w-2\}$. Therefore,
\begin{align}\label{eq:distinguishing_relation}
    \epsilon_i(-dg)^{-i} = \epsilon_0 - \frac{1}{g}\sum_{j=0}^{i-1}(-dg)^{-j} \delta_j \quad \text{for all $i\in \{0,1,\dots, w-1\}$}.
\end{align}
Then,
\begin{align}\label{reduction_fixed_A}
    \delta = \frac{g}{T}(\epsilon_0 - \epsilon_w(-dg)^{-w}).
\end{align}

We now unfix $A\in R_q^{m\times l}$ and take the expectation of \cref{reduction_fixed_A} over $A \leftarrow R_q^{m\times l}$ to see that
\begin{align}
    \abs{\expect_A[\delta]} = 
    \frac{g}{T}\Bigl|\expect_A[\epsilon_0 - \epsilon_w(-dg)^{-w}]\Bigr| \geq \Bigl(g-\frac{1}{d}\Bigr)(\epsilon - (dg)^{-w}) \geq \frac{1}{4}\Bigl(\epsilon - \frac{1}{3^w}\Bigr),
\end{align}
where the first inequality uses $T\leq dg/(dg-1)$, $\abs{\epsilon_w}\leq 1$, and $\epsilon = \abs{\expect_A[\epsilon_0]}$. 

Since $\Time(\calB) = \Time(\calA) + \polytext(w)$ and $\abs{\expect_A[\delta]}$ is the advantage of $\calB$ for solving $\MLWE_{l,m,\eta}$, the proposition follows.
\end{proof}

\section{Concrete parameters}
\label{sec:concrete}

In this section, we describe how to adjust the parameter settings of $\Dilithium$ using \cref{thm:stmsis} to achieve security levels comparable to those considered in the $\Dilithium$  specifications \cite{Dilithium}, $\DilithiumQROM$ \cite{Concrete}, and the relevant NIST Federal Information Processing Standards (FIPS) \cite[Appendix A]{FIPS204}. 

We will use the same notation as in the $\Dilithium$ specification, \cite{Dilithium}. \cite{Dilithium} specifies $\Dilithium$ in terms of the variables
\begin{equation}
     q, n, k, l, H, \tau, d, \tau, \gamma_1, \gamma_2, \eta, \beta.
\end{equation}
Except for the variable $d$, these variables roughly specify $\Dilithium$ according to the simplified version given in \cref{fig:dilithium_simplified}. The variable $d$ specifies a further compression of the public key. To see how these variables precisely specify the full version of $\Dilithium$, we refer the reader to \cite{Dilithium}.

The security analysis of $\CRYSTALSDilithium$ in \cite{Concrete} leads to \cite[Eqs. (10) and (11)]{Concrete} which shows the following. Given a quantum query algorithm $\calA$ for breaking the $\sUFCMA$-security of $\Dilithium$, there exist quantum algorithms $\calB, \mathcal{D}, \mathcal{E}$ and quantum query algorithm $\calC$ such that $\Time(\calB) = \Time(\calC) =  \Time(\calA)$ and  $\Time(\calD) \approx \Time(\calA)$ with 
\begin{equation}\label{DilSecure}
    \Adv^{\sUFCMA}_{\Dilithium}(\calA) \le 2^{-\alpha+1} + \Adv_{k, l, \eta}^{\MLWE}(\calB) + \Adv_{H, \tau, k, l+1, \zeta}^{\STMSIS}(\calC) + \Adv_{k, l, \zeta'}^{\MSIS}(\mathcal{D}) + \Adv_{\Sam}^\PR(\calE)
\end{equation}
where $\zeta, \zeta'$ are functions of parameters $\gamma_1, \gamma_2, \beta, d, \tau$ defined as follows:
\begin{align}
    \zeta \coloneqq \max(\gamma_1 - \beta, 2\gamma_2+1+2^{d-1}\tau) \quad \text{and} \quad
    \zeta' \coloneqq \max(2(\gamma_1 - \beta), 4\gamma_2 + 2).
\end{align}
$\Adv_{\Sam}^\PR(\calE)$ is the advantage of any algorithm distinguishing between the pseudorandom function used by $\Dilithium$ and a randomly selected function; and $\alpha$ is a min-entropy term that can be bounded using \cite[Lemma C.1 of ePrint version]{Concrete} by
\begin{equation}
    \alpha \geq \min\Biggl(-n\log\Biggl(\frac{2\gamma_1 + 1}{2\gamma_2-1}\Biggr),-kl\log(n/q)\Biggr).
\end{equation}
In the QROM, we can construct an optimal pseudorandom function using a random oracle such that $\Adv_{\Sam}^\PR(\calE)$ is asymptotically negligible and can be neglected.

\cref{thm:stmsis} shows that the hardness of $\STMSIS$ in the QROM is at least that of $\MLWE$. Therefore, \cref{thm:stmsis} and \cref{DilSecure} rigorously imply the asymptotic result that, under suitable choices of parameters as functions of the security parameter $\lambda$, if there are no $\poly$-time quantum algorithms that solve $\MLWE$ or $\MSIS$ then there is no $\poly$-time quantum algorithm that breaks the $\sUFCMA$ security of $\Dilithium$. This is a very positive sign for the security of $\Dilithium$ as $\MSIS$ and $\MLWE$ are far better-studied problems and there is substantial support for the assumption that they are hard problems.

We proceed to give concrete estimates of the Core-SVP security of $\Dilithium$ under several choices of parameters using \cref{thm:stmsis} and \cref{DilSecure}. These estimates rely on some heuristic assumptions that we will clearly state. We remark that the concrete security estimates appearing in \cite{Concrete,Dilithium} use similar heuristic assumptions. 

We begin by dividing both sides of \cref{DilSecure} by $\Time(\calA)$. Using $\Time(\calB) = \Time(\calC) = \Time(\calA)$, assuming the approximation in $\Time(\calD) \approx \Time(\calA)$ can be replaced by equality, and using our work's parameters in \cref{tab:compare_dilithium_specs,tab:compare_dilithium_qrom,tab:noComp} for which $\alpha \geq 257$, we obtain
\begin{equation}
    \sfS(\calA) \leq 2^{-256} + \frac{\Adv_{k, l, \eta}^{\MLWE}(\calB)}{\Time(\calB)} + \frac{\Adv_{H, \tau, k, l+1, \zeta}^{\STMSIS}(\calC)}{\Time(\calC)} + \frac{\Adv_{k, l, \zeta'}^{\MSIS}(\mathcal{D})}{\Time(\calD)},
\end{equation}
where $\sfS(\calA) \coloneqq  \Adv^{\sUFCMA}_{\Dilithium}(\calA)/\Time(\calA)$.

By \cref{thm:stmsis}, for any $\eta' \in \mathbb{N}$ with $\eta' < \floor{q/32}/(2\zeta n(k+l+1))$, there exists a quantum algorithm $\calC'$ for $\MLWE_{k+l+1,k,\eta'}$, such that
\begin{equation}
    \sfS(\calA) \leq 2^{-256} + \frac{\Adv_{k, l, \eta}^{\MLWE}(\calB)}{\Time(\calB)} + \frac{8Q^2\sqrt{\Adv_{k+l+1, k, \eta'}^{\MLWE}(\calC')}}{\Time(\calC)} + \frac{\Adv_{k, l, \zeta'}^{\MSIS}(\mathcal{D})}{\Time(\calD)},
\end{equation}
where $Q$ is the number of queries $\calC$ uses and we assume that \cref{eq:stmsis} is well-approximated by $\epsilon^2/(64Q^4)$, in particular, that $\tau$ is sufficiently large.

Also by \cref{thm:stmsis}, we have $\Time(\calC')$ is at most $\Time(\calC)$ plus polynomial terms. Heuristically assuming that we can neglect the polynomial terms and using $Q \leq \Time(\calC)$, we obtain
\begin{equation}\label{eq:security_after_reduction}
   \sfS(\calA) \leq 2^{-256} + \frac{\Adv_{k, l, \eta}^{\MLWE}(\calB)}{\Time(\calB)} + 8Q^{3/2} \sqrt{\frac{\Adv_{k+l+1, k, \eta'}^{\MLWE}(\calC')}{\Time(\calC')}} + \frac{\Adv_{k, l, \zeta'}^{\MSIS}(\mathcal{D})}{\Time(\calD)}.
\end{equation}

Now, for NIST security level $l\in [5]$, we upper bound $Q$ by $B_l$, where $B_l$ is given in \cref{tab:nist_level_query}. 
\begin{table}[ht]
\centering\renewcommand{\arraystretch}{1.5}
    \begin{tabular}{c|p{1cm}|p{1cm}|p{1cm}|p{1cm}|p{1cm}}
        NIST Security Level (SL$l$) &  SL1 & SL2 & SL3 & SL4 & SL5\\
        \hline
        Upper bound on $Q$ $(B_l)$ &  $2^{64}$ & $2^{86}$ & $2^{96}$ & $2^{128}$ & $2^{128}$
    \end{tabular}
    \caption{Upper bounds on $Q$ for NIST security levels 1 to 5. These numbers are based on \cite[Appendix A]{FIPS204} together with well-known quantum query complexity results if we model the block ciphers and hash functions used in \cite[Appendix A]{FIPS204} as random functions.\protect\footnotemark}
    \label{tab:nist_level_query}
\end{table}

From the third term on the right-hand side of \cref{eq:security_after_reduction}, we see that the Quantum Core-SVP security of $\STMSIS$ can be estimated by 
\begin{equation}\label{eq:reduction_loss}
    \frac{z}{2} - \frac{3}{2} \log(B_l) - 3,
\end{equation}
where $z$ is the Quantum Core-SVP security of the associated $\MLWE$ problem.

Having reduced the $\sEUFNMA$ security of $\Dilithium$ to the security of standard lattice problems $\MLWE$ and $\MSIS$, we proceed to estimate their security. Following the analysis in the $\Dilithium$ specifications \cite{Dilithium}, we perform our security estimates via the Core-SVP methodology introduced in \cite{AlkimNewHope2016}. In the Core-SVP methodology, we consider attacks using the Block Khorkine-Zolotarev (BKZ) algorithm \cite{BkzOriginal1994,ChenNguyen2011}. The BKZ algorithm with block size $\mu\in \mathbb{N}$ works by making a small number of calls to an SVP solver on $\mu$-dimensional lattices. The Core-SVP methodology conservatively assumes that the run-time of the BKZ algorithm is equal to the cost of a single run of the SVP solver at its core. The latter cost is then estimated as $2^{0.265\mu}$ since this is the cost of the best quantum SVP solver \cite[Section C.1]{Dilithium} due to Laarhoven \cite[Section 14.2.10]{LaarhovenThesis2016}. Therefore, to estimate the security of an $\MLWE$ or $\MSIS$ problem, it suffices to estimate the smallest $\mu \in \mathbb{N}$ such that BKZ with block-size $\mu$ can solve the problem.  Then we say $0.265\mu$ is the \emph{Quantum Core-SVP} security of the problem. 

To describe how the block-size can be estimated, it is convenient to define the function $\delta\colon \mathbb{N} \to \mathbb{R}$,
\begin{equation}
   \delta(\mu) \coloneqq \Bigl(\frac{(\mu\pi)^{1/\mu} \mu}{2\pi e}\Bigr)^{\frac{1}{2(\mu-1)}}.
\end{equation}

\subsection{Concrete security of MLWE}
\label{sec:MLWE}

Our security analysis of $\MLWE$ generally follows the $\Dilithium$ specifications, \cite[Appendix C.2]{Dilithium}. For $a,b,\epsilon \in \mathbb{N}$, we first follow \cite[Appendix C.2]{Dilithium} and assume that $\MLWE_{a,b,\epsilon}$ is as hard as the Learning With Errors problem $\LWE_{na,nb,\epsilon}$. Then, for $a',b'
\in \mathbb{N}$, $\LWE_{a',b',\epsilon}$ is defined to be the same as $\MLWE_{a',b',\epsilon}$ with $n$ set to $1$ so that $R_q = \mathbb{Z}_q$.

Then, as done in \cite[Appendix C.2]{Dilithium}, we follow the security analysis in \cite{AlkimNewHope2016}. \cite{AlkimNewHope2016} considers two attacks based on the BKZ algorithm, known as the primal attack and dual attack. The block-size is then taken as the minimum of the block-sizes for the primal and dual attacks. These attacks are analyzed as follows.

\footnotetext{Given a random function $f\colon [N]\to [N]$, the number of quantum queries to $f$ needed to find a preimage of $1$ is $\Theta(N^{1/2})$ \cite{grover1996fast} and the number of quantum queries to $f$ needed to find a collision, i.e., $i\neq j$ such that $f(i)=f(j)$, is $\Theta(N^{1/3})$ \cite{NoteSetEqualityZhandry2015}. (We ignore the constants hidden in the $\Theta$-notation; more detailed analysis is possible, see, e.g., \cite{GroverDetailed2020}.)}

\begin{enumerate}
    \item Primal attack \cite[Section 6.3]{AlkimNewHope2016}. Let $c \coloneqq na + nb + 1$. Then, to solve $\LWE_{na,nb,\epsilon}$, we set the BKZ block-size $\mu$ to be equal to the smallest integer $\geq 50$ such that:\footnote{In \cite[Section 6.3]{AlkimNewHope2016}, the exponent on $\delta(\mu)$ is given as $2\mu-c-1$, but it was later corrected to $2\mu-c$ by \cite[Section 3.2]{AlbrechtExpectedCost2017}. There can be spurious solutions with $0<\mu<50$ for which the approximations leading to the inequality break down.} $\xi \sqrt{\mu} \leq \delta(\mu)^{2\mu-c} \cdot q^{na/c}$.

    \item Dual attack \cite[Section 6.4]{AlkimNewHope2016}. Let $c' \coloneqq na+nb$. Then to solve $\LWE_{na,nb,\epsilon}$, we set the BKZ block-size $\mu$ to be equal to the smallest integer $\geq 50$ such that $-2 \pi^2 \tau(\mu)^2 \geq \ln(2^{-0.2075\mu/2})$,
    where $\tau(\mu) \coloneqq \delta(\mu)^{c'-1}q^{nb/c'} \epsilon /q$.
\end{enumerate}

\subsection{Concrete security of MSIS}
\label{sec:MSIS}

Our security analysis of $\MSIS$ uses heuristics in the $\Dilithium$ specifications \cite[Appendix C.3]{Dilithium} and \cite{LyubashevskyLatticeSignatures2012} (which is in turn based on \cite{MicciancioRegev2009}).\footnote{We were unable to completely reuse the analysis in \cite[Section C.3]{Dilithium} as it is not completely described. Comparing the estimates for $\mu$ obtained by the method here with that in \cite[Table 1]{Dilithium}, we find our estimates are consistently around $4/5$ times that given in \cite[Table 1]{Dilithium}. Therefore, our estimates underestimate the security of $\MSIS$ compared to \cite{Dilithium}.} For $a,b,\xi \in \mathbb{N}$, we first follow \cite[Appendix C.3]{Dilithium} and assume that $\MSIS_{a,b,\xi}$ is as hard as the Short Integer Solutions problem $\SIS_{na,nb,\xi}$. Then, for $a',b'\in \mathbb{N}$, $\SIS_{a',b',\xi}$ is defined to be the same as $\MSIS_{a',b',\xi}$ with $n$ set to $1$ so that $R_q = \mathbb{Z}_q$. Following \cite{LyubashevskyLatticeSignatures2012}, we estimate the security of $\SIS_{na,nb,\xi}$, by considering the attack that uses the BKZ algorithm with block-size $\mu$ to find a short non-zero vector in the lattice
\begin{equation}
    L(A) \coloneqq \{y \in \mathbb{Z}^{na+nb} \mid [I_{na}\mid A] \cdot y = 0 \modulo q \},
\end{equation}
where $A \leftarrow \mathbb{Z}_q^{na\times nb}$. Following \cite[Eq.~(3) of ePrint version]{LyubashevskyLatticeSignatures2012}, the BKZ algorithm is expected to find a vector $v \in L(A)$ of Euclidean length\footnote{Compared to \cite[Eq.~(3) of ePrint version]{LyubashevskyLatticeSignatures2012}, we do not take the $\min$ of \cref{eq:sis_euclidean_length} with $q$ since ``trivial'' vectors of the form $q$ times a standard basis vector have too large of an infinity-norm to be a solution to $\SIS_{na,nb,\xi}$ when $\xi < q$, as will be the case for our parameter choices.} 
\begin{equation}\label{eq:sis_euclidean_length}
    2^{2 \sqrt{na \log(q) \log(\delta(\mu)) }}.
\end{equation}
We assume that the entries of $v$ have the same magnitudes since a similar assumption is made in \cite[Appendix C.3]{Dilithium}. Then, to solve $\SIS_{na,nb,\xi}$, we set the BKZ block-size $\mu$ to be the smallest positive integer such that
\begin{equation}
    \frac{1}{\sqrt{na + nb}}\cdot  2^{2 \sqrt{na \log(q) \log(\delta(\mu)) }} \leq \xi.
\end{equation}

\subsection{Parameter sets for different security levels}

To set $\Dilithium$ parameters, we also require $q = 1 \modulo 2\gamma_2$, $q> 4\gamma_2$ (see \cite[Lemma 1]{Dilithium} or \cite[Lemma 4.1]{Concrete}), and $\beta = \tau \eta$ (see \cite[Table 2]{Dilithium}). Moreover, we set parameters to minimize the following metrics \cite{Concrete}:
\begin{enumerate}
    \item the public key size in bytes: $(nk(\ceil{\log(q)} -d) + 256)/8$; 
    \item the signature size in bytes: $(n  l  \ceil{\log(2\gamma_1)} + nk + \tau (\log(n) + 1))/8$;
    \item the expected number of repeats to sign a message: $\exp\bigl(n\beta\bigl(\frac{l}{\gamma_1} + \frac{k}{\gamma_2}\bigr)\bigr)$.
\end{enumerate}

In \cref{tab:compare_dilithium_specs,tab:compare_dilithium_qrom,tab:noComp}, we give parameter sets achieving different levels of security that we calculated using the methodology described above. In all tables, we use:
\begin{equation}\label{eq:modulus}
    q_0 \coloneqq 12439554041857 = 2^{11} \cdot 3 \cdot 19\cdot 1447\cdot 73643 + 1 \approx 2^{43.5}.
\end{equation}
In particular, $q_0 = 1 \modulo 2n$.

Having established our attack model, we quantify the security provided by the proposed parameter sets for both $\Dilithium$ \cite{Dilithium} and $\DilithiumQROM$ \cite{Concrete} using our model in \cref{tab:compare_dilithium_specs,tab:compare_dilithium_qrom}. In those tables, we also provide new parameter sets that guarantee the same security if we analyzed the security of $\STMSIS$ using \cref{thm:stmsis}, in particular, \cref{eq:reduction_loss}. The new parameter sets are chosen in a way that minimizes their corresponding public key and signature sizes, as well as the expected number of repeats in $\Sign$. In \cref{tab:noComp}, we provide our recommended parameter sets at the five security levels specified by NIST.\footnote{The headings ``SL$l$'' appearing in \cref{tab:compare_dilithium_specs} follow the headings used in \cite[Table 2]{Dilithium}. Under our attack model, they do not exactly correspond to the desired security of NIST's SL$l$. This explains the need for \cref{tab:noComp} and why \cref{tab:noComp} differs from \cref{tab:compare_dilithium_specs}.}

\begin{table}[ht]
    \centering
    \scalebox{0.8}{
    \renewcommand*{\arraystretch}{1.1}
    \begin{tabular}{c|c|c|c|c|c|c}
        \hline
        &\multicolumn{3}{c|}{$\Dilithium$ \cite{Dilithium}} &\multicolumn{3}{c}{Our work} \\
        &SL2 &SL3 &SL5 &SL2 &SL3 &SL5 \\
        \hhline{=|=|=|=|=|=|=}
        $q$ &$2^{23}-8191$ &$2^{23}-8191$ &$2^{23}-8191$ &$q_0$ &$q_0$ &$q_0$ \\
        $n$ &256 &256 &256 &512 &512 &512 \\
        $(k, l)$ &(4, 4) &(6, 5) &(8, 7) &(10, 4) &(12, 8) &(16, 13) \\
        $d$ &13 &13 &13 &15 &15 &15 \\
        $\tau$ &39 &49 &60 &40 &40 &40\\
        $\gamma_1$ &$2^{17}$ &$2^{19}$ &$2^{19}$ &220929 &370432 &555648 \\
        $\gamma_2$ &95232 &261888 &261888 &441858 &740864 &1111296 \\
        $\zeta$ &350209 &724481 &769537 &1539077 &2137089 &2877953 \\
        $\zeta'$ &380930 &1048184 &1048336 &1767434 &2963458 &4445186\\
        $\eta$ &2 &4 &2 &2 &2 &2\\
        $\eta'$ &N/A &N/A &N/A &8 &4 &2 \\
        \hline
        $pk$ size (bytes)       &1312 &1952 &2592 &18592 &22304 &29728 \\
        $\sigma$ size (bytes)   &2476 &3448 &4804 &5554 &11058 &18546\\
        Expected Repeats        &4.25 &5.10 &3.85 &5.30 &4.70 &4.70 \\
        \hline
        $\LWE$ $\BKZ$ Block-Size    &448 &669 &911 &605 &1205 &2111\\
        Quantum Core-SVP            &118 &177 &241 &160  &319 &559\\
        \hline
        ``$\STMSIS$'' $\BKZ$ Block-Size         &N/A &N/A &N/A &1753 &2177 &3025\\
        Quantum Core-SVP            &N/A &N/A &N/A &100   &141  &205\\
        \hline
        $\SIS$ $\BKZ$ Block-Size    &363 &533  &773 &4942  &5644  &7423\\
        Quantum Core-SVP            &96  &141  &204 &1309  &1495  &1967\\
        \hline
    \end{tabular}}
    \caption{We give parameter sets that matches the quantized security of those proposed in the $\Dilithium$ specifications \cite{Dilithium}. $q_0$ is defined in \cref{eq:modulus}. The ``$\STMSIS$'' block-size should be understood as the block-size of the $\LWE$ problem reduced to via \cref{thm:stmsis} and \cref{sec:MLWE}.}
    \label{tab:compare_dilithium_specs}
\end{table}

Compared to the original $\Dilithium$ at ``SL3'', we find an increase in public key size of $\approx 11.4\times$ and an increase in signature size of $\approx 3.2\times$ \cite{Dilithium}. However, our results are provably secure based on conventional hardness assumptions for the $\MSIS$ and $\MLWE$ problems, 
whereas $\Dilithium$ must also assume that $\STMSIS$ is hard for the parameters that they set.  (See the discussion in \cref{subsec:knownsecurity}.)  Therefore, the main advantage of our parameters compared to $\Dilithium$ is that ours are based on rigorous reductions from hard lattice problems, whereas $\Dilithium$'s are based on highly heuristic reductions. We note that the heuristic reduction from $\STMSIS$ to (a variant of) $\MSIS$ given in \cite[End of Section 6.2.1]{Dilithium} has been recently challenged~\cite{STMSISConcrete}.

Compared to $\DilithiumQROM$ at its recommended security level, we find an increase in public key size of $\approx 2.9\times$ and an increase in signature size of $\approx 1.3\times$ \cite{Concrete}. However, while both parameter sets produce schemes that can be proven secure under the assumptions that $\MSIS$ and $\MLWE$ are hard, our parameter sets allow the use of the NTT and are therefore more efficient to implement than those of $\DilithiumQROM$. We analyze this difference in greater detail below.

\begin{table}[ht]
        \centering
        \scalebox{0.85}{
        \renewcommand*{\arraystretch}{1.1}
        \begin{tabular}{c|c|c|c|c}
            \hline
            & \multicolumn{2}{c|}{$\DilithiumQROM$ \cite{Concrete}} &\multicolumn{2}{c}{Our work} \\
             &recommended &very high &recommended &very high \\
            \hhline{=|=|=|=|=}
            $q$  &$2^{45}-21283$ &$2^{45}-21283$ &$q_0$ &$q_0$ \\
            $n$ &512 &512 &512 &512 \\
            $(k, l)$ &(4, 4) &(5, 5) &(12, 5) &(13, 8) \\
            $d$ &15 &15 &15 &15 \\
            $\tau$ &46 &46 &40 &40 \\
            $\gamma_1$ &905679 &905679 &279949 &370432 \\
            $\gamma_2$ &905679 &905679 &555648 &740864 \\
            $\zeta$ &2565023 &2565023 &1766657 &2137089\\
            $\zeta'$ &3622718 &3622718 &2222594 &2963458 \\
            $\eta$ &7 &3 &2 &2 \\
            $\eta'$ &N/A &N/A &5 &4 \\
            \hline
            $pk$ size (bytes) &7712 &9632 &22304 &24160 \\
            $\sigma$ size (bytes) &5690 &7098 &7218 &11122 \\
            Expected Repeats &4.29 &2.18 &5.03 &4.97 \\
            \hline
            $\LWE$ $\BKZ$ Block-Size    &499 &620 &794     &1232 \\
            Quantum Core-SVP            &132 &164 &210     &326  \\
            \hline
            ``$\STMSIS$'' $\BKZ$ Block-Size          &N/A &N/A &2118    &2374 \\
            Quantum Core-SVP            &N/A &N/A &133     &167  \\
            \hline
            $\SIS$ $\BKZ$ Block-Size    &N/A &N/A &5910    &6197 \\
            Quantum Core-SVP            &N/A &N/A &1566    &1642 \\
            \hline
        \end{tabular}
        }
        \caption{We give parameter sets that match the quantized security of those proposed in $\DilithiumQROM$ \cite{Concrete}. $q_0$ is defined in \cref{eq:modulus}. In the ``Our work'' columns, we assume $Q$ is bounded by $2^{96}$, which corresponds to NIST Security Level 3. The ``$\STMSIS$'' block-size should be understood as the block-size of the $\LWE$ problem reduced to via \cref{thm:stmsis} and \cref{sec:MLWE}.}
        \label{tab:compare_dilithium_qrom}
\end{table}

The main reason why we must increase the public key and signature sizes is due to the loss in the reduction from $\MLWE$ to $\STMSIS$, as stated in \cref{thm:stmsis}. Concretely, the loss manifests as \cref{eq:reduction_loss}, which we used to calculate the Quantum Core-SVP numbers for the $\STMSIS$-based $\MLWE$ problem. An interesting open question is to understand whether this loss is intrinsic.

\begin{table}[ht]
    \centering
    \scalebox{0.85}{
    \renewcommand*{\arraystretch}{1.1}
    \begin{tabular}{c|c|c|c|c}
        \hline
        &SL1 &SL2 &SL3 &SL4/5 \\
        \hhline{=|=|=|=|=}
        $q$ &$q_0$ &$q_0$ &$q_0$ &$q_0$ \\
        $n$ &512 &512 &512 &512 \\
        $(k, l)$ &(7, 7) &(9, 9) &(10, 10) &(13, 13) \\
        $d$ &15 &15 &15 &15 \\
        $\tau$ &40 &40 &40 &40 \\
        $\gamma_1$ &277824 &329916 &370432 &555648 \\
        $\gamma_2$ &555648 &659832 &740864 &1111296 \\
        $\zeta$ &1766657 &1975025 &2137089 &2877953 \\
        $\zeta'$ &2222594 &2639330 &2963458 &4445186 \\
        $\eta$ &2 &2 &2 &2 \\
        $\eta'$ &7 &5 &4 &2 \\
        \hline
        $pk$ size (bytes)       &13024 &16736 &18592 &24160 \\
        $\sigma$ size (bytes)   &9458  &12146 &13490 &18354 \\
        Expected Repeats        &4.70  &5.34 &5.25 &4.21 \\
        \hline
        $\LWE$ $\BKZ$ Block-Size    &967 &1325 &1509 &2079 \\
        Quantum Core-SVP            &256 &351 &399  &550 \\
        \hline
        ``$\STMSIS$'' $\BKZ$ Block-Size          &1252 &1665 &1866 &2454 \\
        Quantum Core-SVP            &66   &88   &100   &130  \\
        \hline
        $\SIS$ $\BKZ$ Block-Size    &3100  &4064 &4525  &5822  \\
        Quantum Core-SVP            &821   &1076 &1199  &1542  \\
        \hline
    \end{tabular}}
    \caption{We give parameter sets that most closely match the security levels requested by NIST \cite[Appendix A]{FIPS204}. $q_0$ is defined in \cref{eq:modulus}. The ``$\STMSIS$'' block-size should be understood as the block-size of the $\LWE$ problem reduced to via \cref{thm:stmsis} and \cref{sec:MLWE}.}
    \label{tab:noComp}
    \end{table}

Next, we quantitatively compare the efficiency of ring multiplication for the parameter sets in  \cref{tab:compare_dilithium_qrom}. Our work uses $q=q_0$ and $n=512$. Since $q = 1 \mod 2n$, we can multiply two elements in $R_q$ using the NTT, which uses $\frac{3}{2}n\log(n) + 2n = 7936$ multiplications in $\mathbb{Z}_q$ and $3n\log(n) = 13824$ additions in $\mathbb{Z}_q$ \cite[Section 2.2]{LiangNTT2021}. 

$\DilithiumQROM$ uses a $q$ such that $q=5\mod 8$ and we can no longer use the NTT to multiply elements in $R_q$. Instead, we consider the Hybrid-NTT (H-NTT) \cite[Section 5]{LiangNTT2021}. When $q = 1 \mod (n/2^{\alpha+\beta-1})$, where $\alpha,\beta$ are non-negative integer parameters, and $n$ is a power of $2$, H-NTT can multiply two elements in $R_q$ using
\begin{equation}
    \frac{3}{2} n\log(n) + \Bigl(3\cdot 2^{\alpha+\beta-3} + 2^{\alpha-2} + 3\cdot 2^{\beta-3} + 2^{\alpha-\beta-2}-\frac{3}{2}(\alpha+\beta)+\frac{5}{4}\Bigr)n
\end{equation}
multiplications in $\mathbb{Z}_q$, and
\begin{equation}  
3n\log(n)+ \Bigl(5\cdot 2^{\alpha+\beta-2} + 5\cdot 2^{\beta-2} + 5\cdot 2^{\alpha-2} - 3(\alpha+\beta) - \frac{15}{4} \Bigr)n
\end{equation}
additions in $\mathbb{Z}_q$. $\DilithiumQROM$ uses  $q = 2^{45}-21283$ and $n=512$ so the condition  $q = 1 \mod (n/2^{\alpha+\beta-1})$ requires $\alpha+\beta \in \{8, 9, 10\}$.\footnote{Note that $2^{\alpha+ \beta-1}$ needs to be $n$, $n/2$, or $n/4$ for $q = 1 \mod (n/2^{\alpha+\beta-1})$ to be compatible with $q =  5 \mod 8$. This means H-NTT would use $\Omega(n^2)$ multiplications and additions in $\mathbb{Z}_q$ when multiplying elements of $R_q$ in $\DilithiumQROM$.} The number of multiplications and additions in $\mathbb{Z}_q$ is minimized by setting $\alpha=\beta=4$. Therefore, H-NTT uses $\frac{3}{2} n \log(n) + 95.5 n = 55808$ multiplications and $3n \log(n) + 332.25n = 183936$ additions in $\mathbb{Z}_q$ per ring element multiplication.

We proceed to count the number of ring element multiplications and additions used by $\Dilithium$'s algorithms $(\KeyGen, \Sign, \Verify)$ when $\Sign$ involves $r \in \mathbb{N}$ repeats. For the count, we use the simplified descriptions of these algorithms given in \cref{fig:dilithium_simplified}.
\begin{alignat*}{5}
    &\text{Multiplications} \quad\quad&&\KeyGen\colon kl, \quad &&\Sign\colon (kl+k+l)r, \quad &&\text{and} \quad
    &&\Verify\colon kl+k
    \\
  &\text{Additions} \quad\quad &&\KeyGen\colon kl, \quad &&\Sign\colon (kl+l)r,
    \quad &&\text{and} \quad &&\Verify\colon kl
\end{alignat*}
Note that adding two ring elements requires $n$ additions in $\mathbb{Z}_q$.

Now, in \cref{tab:operationCounts}, we compare the number of multiplications and additions in $\mathbb{Z}_q$ used by $\Dilithium$ when instantiated with the parameter sets in \cref{tab:compare_dilithium_qrom}.

\begin{table}[ht]
        \centering
        \scalebox{0.85}{
        \renewcommand*{\arraystretch}{1.1}
        \begin{tabular}{c|c|c|c|c}
            \hline
            &\multicolumn{2}{c|}{$\DilithiumQROM$ \cite{Concrete}} &\multicolumn{2}{c}{Our work} \\
            &recommended &very high &recommended &very high \\
            \hline
            \multicolumn{5}{c}{Multiplications in $\mathbb{Z}_q$} \\
            \hline 
            Gen     &892928   &1395200  &476160  &825344   \\
            Sign    &5745992  &4258150  &3073692  &4930240 \\
            Verify  &1116160  &1674240  &571392  &928512  \\
            \hline
            \multicolumn{5}{c}{Additions in $\mathbb{Z}_q$} \\
            \hline
            Gen     &2951168  &4611200  &860160  &1490944  \\
            Sign    &18981980 &14067802 &5521572 &8873160 \\
            Verify  &3686912  &5530880  &1026048  &1670656  \\
            \hline
        \end{tabular}
        }
        \caption{The number of $\mathbb{Z}_q$ additions and $\mathbb{Z}_q$ multiplications required to implement the ring operations performed by the $\KeyGen, \Sign$, and $\Verify$ algorithms of $\Dilithium$. These numbers are calculated using the parameters in \cref{tab:compare_dilithium_qrom} and the analysis from \cite{LiangNTT2021}. }
        \label{tab:operationCounts}
\end{table}

\cref{tab:operationCounts} shows that $\DilithiumQROM$ at its recommended security level would require approximately the following increases in $\mathbb{Z}_q$-operation counts when compared to our work:
\begin{alignat*}{5}
    &\text{Multiplications} \quad\quad  &&\KeyGen\colon 1.9 \times, \quad &&\Sign\colon 1.9 \times, \quad &&\text{and} \quad
    &&\Verify\colon 2.0\times
        \\
    &\text{Additions} \quad\quad &&\KeyGen\colon 3.4\times, \quad &&\Sign\colon 3.4\times,
    \quad &&\text{and} \quad &&\Verify\colon 3.6\times
\end{alignat*}

We therefore identify a cost-benefit trade-off between the two provably secure formulations of $\Dilithium$, our work and $\DilithiumQROM$, at the recommended security level. Our work's public key and signature sizes are $2.9\times$ and $1.3\times$ larger than $\DilithiumQROM$'s, respectively. However, our scheme requires $1.9\times$ to $3.6\times$ fewer $\mathbb{Z}_q$-operations to implement. Moreover, unlike $\DilithiumQROM$, our work proves security on $\Dilithium$'s native ring where $q=1\mod 2n$.

We make a final remark on the concrete security analysis of our work as well as those originally done for $\Dilithium$ and $\DilithiumQROM$: no analysis accounts rigorously for potential differences in the hardness between $\LWE$ with a uniform error distribution and $\SIS$ under the $\ell_\infty$ norm as compared to the better-studied versions of these problems which employ a Gaussian error distribution and the $\ell_2$ norm, respectively. However, the hardness of the former problems is comparable to the hardness of the latter problems over parameter regimes that are polynomially related in the security parameter \cite{LatticeNorms,SmallParams}. Therefore, like the original analyses of $\Dilithium$ and $\DilithiumQROM$, we assume that the differences in hardness are not significant enough to seriously threaten security.

\section{Acknowledgments}

This work was supported by the National Institute of Standards and Technology (NIST) and the Joint Center for Quantum Information and Computer Science (QuICS) at the University of Maryland. This research paper is not subject to copyright in the United States.  The opinions, findings, and conclusions in the paper are those of the authors and do not necessarily reflect the views or policies of NIST or the United States Government.

We thank Marcel Dall'Agnol, Jiahui Liu, Yi-Kai Liu,  and Ray Perlner for helpful feedback and correspondence. We thank Amin Shiraz Gilani for his involvement during the early stages of this project. 

\printbibliography
\end{document}